\documentclass[letterpaper]{article}
\usepackage{uai2019}
\usepackage[margin=1in]{geometry}

\usepackage{times}
\usepackage{natbib}
\usepackage{breakcites}
\usepackage{graphicx}
\usepackage{amsmath}
\usepackage{amsfonts}
\usepackage{amssymb}
\usepackage{amsthm}
\usepackage{xcolor}
\usepackage{caption} 

\usepackage{subcaption}
\usepackage{booktabs}
\usepackage[title]{appendix}
\usepackage{hyperref}
\usepackage{breakurl}

\captionsetup{compatibility=false}

\def\clique{\mathcal{C}}
\def\Prob{\mathbf{P}}
\def\nc{N}
\def\cc{n}

\def\levy{\nu}
\def\atom{\mu}

\def\Pois{\mbox{Poisson}}

\def\stablebeta{\mbox{SB}}

\def\E{\mathbb{E}}
\def\numtriangles{\lambda}
\def\clusteringcoeff{C}

\sloppy
\newtheorem{proposition}{Proposition}
\newtheorem{corollary}{Corollary}
\title{Random clique covers for graphs with local density and global sparsity}

\author{Sinead A. Williamson$^{1, 2, 3}$ and Mauricio Tec$^1$\\ $^1$Department of Statistics and Data Science, University of Texas at Austin \\ $^2$Department of Information, Risk and Operations Management, University of Texas at Austin\\ $^3$CognitiveScale}
\begin{document}
\maketitle

\begin{abstract}
 Large real-world graphs tend to be sparse, but they often contain many densely connected subgraphs and exhibit high clustering coefficients. While recent random graph models can capture this sparsity, they ignore the local density, or vice versa. We develop a Bayesian nonparametric graph model based on random edge clique covers, and show that this model can capture power law degree distribution, global sparsity and non-vanishing local clustering coefficient. This distribution can be used directly as a prior on observed graphs, or as part of a hierarchical Bayesian model for inferring latent graph structures.

\end{abstract}

\section{INTRODUCTION}
Random graph models provide statistical tools for network analysis and can often be used as prior distributions in a Bayesian framework. Such models aim to capture various properties of real-world graphs, such as power-law degree distributions \citep{Albert:Barabasi:2002,Dan2004,bloem2016}, small-world properties \citep{WatStr1998}, or latent community structure \citep{HolLasBlaLei1983,AirBleFieXin2008,KarNew2011}.

One statistic of interest is the density of a graph, defined as the number of edges over the number of \textit{possible} edges for a binary, undirected graph. This can be extended to distributions over graphs: we think of a random graph as being dense if the number of edges grows quadratically with the number of vertices, and sparse if it grows sub-quadratically with the number of vertices \citep{NesOss2012}. Many commonly-used random graphs, such as Erd\"{o}s-R\'{e}nyi graphs \citep{Gil1959,ErdRen1960} or stochastic blockmodels and their variants \citep{HolLasBlaLei1983,AirBleFieXin2008,KarNew2011}, concentrate on dense graphs \citep[see][for a discussion]{LloOrbGhaRoy2012}. Dense behavior is commonly seen in small, closed communities, where every vertex has the option to interact with all other vertices and a fully connected graph is at least conceivable. However it is not generally seen in larger graphs, where a given vertex will not have the opportunity to interact with more than a small subset of the other vertices---for example international online social networks. 

 Recently, a number of Bayesian nonparametric models have been proposed that can generate sparse graphs. The sparse exchangeable graph framework, developed by \citet{CarFox2017} and further explored by  \citet{VeiRoy2015} and \citet{BorChaHol2018}, constructs sparse graphs or multigraphs based on a random atomic measure on the space $\mathcal{V}$ of potential vertices. 
 The edge exchangeable graph framework \citep{CaiCamBro2016,CraDem2017} constructs sparse multigraphs as a sequence of edges, whose distribution is parametrized by a random atomic measure on $\mathcal{V}$. 
 In both classes of model, appropriate design choices yield power-law degree  distribution, another statistical property often seen in real-world graphs.

While distributions such as these offer the ability to model sparse graphs, they do not capture certain other important graph behaviors. While large-scale graphs are usually sparse, locally they exhibit transitivity: if two vertices $j$ and $k$ are both connected to vertex $i$, a connection between vertices $j$ and $k$ is more likely than a connection between $j$ and a randomly selected vertex. This means that the graph will tend to exhibit densely connected subgraphs, even if the overall graph is sparse. Such graphs will typically exhibit a high average local clustering coefficient, indicating that most vertices' neighborhoods are close to being cliques and that the graph has a relatively high number of triangles. This type of behavior is a key component of ``small world'' graphs \citep{WatStr1998}. While extensions to the sparse exchangeable graphs \citep{HerSchMor2016,TodMisCar2016} and the edge-exchangeable graphs \citep{Wil2016} incorporate community-type structure, they do not yield this form of transitivity.

In this paper we explore distributions over graphs that, while sparse, exhibit local density in the form of arbitrarily-large fully-connected subgraphs and non-vanishing average local clustering coefficient.  We achieve this by explicitly modeling graphs as collections of cliques, or fully connected subsets of vertices. Our construction takes the form of a random clique cover, with cliques selected using a nonparametric feature selection model (in this paper, the stable beta Indian buffet process \citep{TehGor2009}, but other choices are possible). The resulting graphs exhibit local density and non-zero local clustering coefficients, with clique sizes controlled primarily by a single parameter. On a global scale, the graphs are sparse, with the degree of sparsity primarily controlled by a separate parameter. We observe power-law distributions over both the number of cliques a vertex belongs to, and its degree. 

When used directly as a distribution over graphs, we show that this model's statistical properties mimic the statistics of real-world graphs (Section~\ref{sec:exp_full}). When used in a hierarchical modeling context, the cliques can be thought of as latent communities (Section~\ref{sec:exp_partial}).

\section{RANDOM CLIQUE COVERS}\label{sec:model}

A graph $G$ is most commonly described using an ordered pair $(V,E)$ of vertices $V$ and edges $E$. $G$ can alternatively be described in terms of its edge 
clique cover \citep{Orlin1977}. An edge clique cover (or intersection graph) is a set of cliques---i.e.\ fully connected subgraphs---such that two vertices share an edge iff they have at least one clique in common. 

This formulation suggests a method for generating random graphs by placing a distribution over cliques. Concretely, let a random clique be a random, finite subset $\clique\subset \mathcal{V}$ of a (possibly uncountable) set $\mathcal{V}$ of potential vertices. Given a distribution over sequences of such subsets, we can generate a sequence $\clique_1, \dots, \clique_\nc$ of cliques, that can be translated into a graph $G=(V,E)$  by adding undirected edges between all vertices with at least one clique in common, so that
$$\begin{aligned}
\textstyle (i, j) &\in \textstyle E\; \mbox{ iff } \; \sum_{\cc=1}^\nc I(i,j\in \clique_\cc \; i\neq j) > 0\\
\textstyle i &\in \textstyle V\; \mbox{ iff } \; \sum_{\cc=1}^\nc I(i\in \clique_\cc) > 0\, .\end{aligned}$$

If we number the vertices in our graph, and represent our cliques in terms of a binary matrix $Z$ where $Z_{\cc i}=1$ iff clique $\cc$ includes vertex $i$, then we can represent the graph's adjacency matrix as 
$$\min\left(Z^TZ-\mbox{diag}(Z^TZ), 1\right)$$
where the min is taken as an element-wise operation.

We can modify this construction to give a distribution over random multigraphs, by letting the number of edges between verticies $i$ and $j$ equal the number of cliques they have in common. We discuss this multigraph formulation in Appendix~\ref{app:multigraph}.

 \subsection{RANDOM CLIQUES GENERATED USING EXCHANGEABLE RANDOM MEASURES}\label{sec:stablebeta}
 
 We use the idea of random clique covers to construct a distribution over graphs of fixed (but random) dimensionality. To do so, we much specify both a distribution over the number of cliques, and distributions over the vertices appearing in those cliques. We let the number of cliques $\nc \sim \mbox{Poisson}(\tau)$, although other choices could easily be made. 
 
 Reasonable desiderata for the random clique selection mechanism might be that the total number of vertices (and edges) is unbounded; that the size of each clique is random; and that cliques overlap with finite probability (avoiding the trivial solution where our graph is made up of a series of disconnected subgraphs). 
 
 One choice that meets these criteria is the stable beta process \citep{TehGor2009}. The stable beta process is a completely random measure \citep{Kin1967} whose atoms lie in (0,1], with L\'{e}vy measure
 
 \begin{equation}
     \levy(d\atom) = \alpha\frac{\Gamma(1+c)}{\Gamma(1-\sigma)\Gamma(c+\sigma)}\atom^{-\sigma-1}(1-\atom)^{c+\sigma-1}d\atom \label{eqn:levymeasure}
 \end{equation}
 
 where $\sigma\in (0,1]$, $c\geq -\sigma$, and  $\alpha>0$. If $\sigma=0$, this reduces to the (homogeneous) beta process \citep{Hjo1990,ThiJor2007}. If $c=1-\sigma$, the stable beta process corresponds to a stable process with all atoms of size larger than one removed.

We can use the stable beta process to construct a distribution over cliques, which we can represent in terms of a binary matrix. If $\atom = \sum_{i=1}^\infty \atom_i\delta_{\theta_i} \sim \stablebeta(c, \sigma,\alpha)$ where each location $\theta_i$ corresponds to a vertex, we can sample an exchangeable sequence of $\nc$ cliques by including the $i$th vertex in the $\cc$th clique with probability $\atom_i$. We can represent the resulting clique allocation as a binary matrix $Z$ with $\nc$ rows and infinitely many columns, where $Z_{\cc i} = 1$ iff vertex $i$ is in clique $\cc$.

If we marginalize out $\atom$, we obtain a predictive distribution for $Z_\cc|Z_1,\dots, Z_{\cc-1}$ that is known as the stable-beta Indian buffet process \citep[SB-IBP,][]{TehGor2009}. If $m_i = \sum_{j=1}^{\cc-1}Z_{ji}$ is the number of times vertex $i$ has been previously selected, then the $\cc$th clique will include that vertex (i.e.\ $Z_{\cc i}=1$) with probability
$$\Prob(Z_{\cc i}=1|Z_1,\dots, Z_{\cc-1}) = \frac{m_i-\sigma}{\cc+c-1}\, .$$
In addition to vertices that have previously appeared, $Z_\cc$ will also select $\Pois\left(\alpha \frac{\Gamma(1+c)\Gamma(\cc+c+\sigma-1)}{\Gamma(\cc+c)\Gamma(c+\sigma)}\right)$ new vertices.

The SB-IBP exhibits a number of power-law behaviors \citep{TehGor2009,BroJorPit2012,HeaRoy2015}. 
First, the total number of non-zero columns of $Z$ (in our case, observed vertices) in $\nc$ rows (cliques) follows a power law, growing, as $\nc\rightarrow \infty$, as $\frac{\alpha}{\sigma}\frac{\Gamma(c+1)}{\Gamma(c+\sigma)} \nc^\sigma$.

Second, the number of non-zero entries per column (in our case, the number of cliques each vertex appears in) follows a Zipf's law. Let $K_{\nc,j}$ be the number of features that appear exactly $j$ times in $\nc$ rows. Then, following \citet{BroJorPit2012},
\begin{equation}K_{\nc,j} \sim \frac{\alpha\Gamma(j-\sigma)\Gamma(1+c)}{j!\Gamma(1-\sigma)\Gamma(c+\sigma)}\nc^\sigma
\sim j^{-\sigma-1}\, .\label{eqn:zipf}\end{equation}
As we will see in Section~\ref{sec:properties}, when applied to edge clique covers, the SB-IBP yields many interesting graph-specific properties, many of which follow from these two observations.

\subsection{A MODEL FOR PARTIALLY OBSERVED CLIQUES}\label{sec:ext}
In a modeling context, we may wish to interpret cliques as latent communities \citep{HolLasBlaLei1983,AirBleFieXin2008,KarNew2011}. For example in a social network, a clique might represent a shared interest or hobby. In this context, we would not necessarily expect all vertices in the community to be connected. Instead, we can think of the community as being a noisy instantiation of a fully connected latent clique. 

To model this in the current context, start with the distribution over edge clique covers described in Section~\ref{sec:stablebeta}. If we associate each latent clique $\clique_\cc$ with a probability $\pi_\cc$, we can form a graph by including an edge between vertices $i$ and $j$ with probability

$$P\left((i,j) \in E\right) = 1-\prod_{\cc: i,j \in \clique_\cc}(1-\pi_\cc)\, .
$$

If we use a single probability $\pi$ across all cliques, this corresponds to including edges according to a noisy-OR likelihood: if two vertices have $m$ latent cliques in common, they share an edge with probability $1-(1-\pi)^m$.

One way of thinking of this model is as a superposition of locally-defined Erd\"{o}s-Renyi $G(n,p)$ graphs. A $G(n,p)$ graph builds a graph with $n$ vertices, by including each of the $\binom{n}{2}$ potential edges with probability $p$. In the partially observed clique model, each row of the SB-IBP selects a clique $\clique_\cc$ of vertices, and builds a subgraph on those vertices according to a $G(|\clique_\cc|,\pi_\cc)$ model. Within the clique, we have a dense subgraph almost surely.

\section{GRAPH PROPERTIES}\label{sec:properties}

In this section, we explore the statistical properties of graphs constructed using an SB-IBP distribution over cliques.  In particular, we show that we obtain sparse graphs that exhibit densely connected subgraphs, with a non-vanishing local clustering coefficient. We focus on the fully observed setting described in Section~\ref{sec:stablebeta}, noting that the partially observed extension described in Section~\ref{sec:ext} will inherit similar properties.

\subsection{SPARSITY}\label{sec:sparsity}

Let $|E|$  be the number of edges in a graph, and $|V|$ be the number of vertices. We can define the density of a fixed graph as the number of edges over the number of possible edges, i.e.\ $D = 2|E| /|V|\left(|V|-1\right)$. 
We can extend this concept to random graphs, by looking at how the number of edges varies with the number of vertices as we change parameters of the model. We say a random graph is dense if $|E|$ grows quadratically with $|V|$, and sparse if it grows sub-quadratically. To explore the sparsity of the proposed random graph, we will look at how $|E|$ and  $|V|$ vary with the number of cliques. In our construction, the number of cliques is a $\mbox{Poisson}(\tau)$ random variable, and so the expected number of cliques can be manipulated by altering the rate $\tau$ of that Poisson distribution.

\begin{proposition}\label{prop:vertices}
As the number $\nc$ of cliques grows to infinity, the number of vertices $|V|$ grows as $\nc^\sigma$.
\end{proposition}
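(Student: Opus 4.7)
The plan is to reduce the claim to a direct computation on the SB-IBP, using the predictive description already recalled in Section~\ref{sec:stablebeta}. First, observe that $|V|$ equals the number of non-zero columns of $Z$ after $\nc$ rows, since a vertex appears in the graph iff it belongs to at least one clique. Under the SB-IBP, each row $\cc$ contributes an independent $\Pois\!\left(\alpha \tfrac{\Gamma(1+c)\Gamma(\cc+c+\sigma-1)}{\Gamma(\cc+c)\Gamma(c+\sigma)}\right)$ number of brand-new vertices, so I would write
$$|V| = \sum_{\cc=1}^{\nc} X_\cc, \qquad X_\cc \sim \Pois\!\left(\alpha \tfrac{\Gamma(1+c)\Gamma(\cc+c+\sigma-1)}{\Gamma(\cc+c)\Gamma(c+\sigma)}\right)\, ,$$
with the $X_\cc$ independent. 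By independence of the Poissons, $|V|$ itself is Poisson with rate equal to the sum of the above rates.

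Next I would extract the asymptotics of the expected value. Using the standard Gamma ratio asymptotic $\Gamma(x+a)/\Gamma(x+b)\sim x^{a-b}$ as $x\to\infty$ with $a=\sigma-1$ and $b=0$ (shifted by $c$), the rate at step $\cc$ behaves like $\alpha \tfrac{\Gamma(1+c)}{\Gamma(c+\sigma)}\cc^{\sigma-1}$ for large $\cc$. Summing and using $\sum_{\cc=1}^{\nc}\cc^{\sigma-1}\sim \nc^{\sigma}/\sigma$, the mean satisfies
$$\E[|V|] \sim \frac{\alpha}{\sigma}\frac{\Gamma(1+c)}{\Gamma(c+\sigma)}\,\nc^\sigma\, ,$$
which is exactly the rate quoted earlier in the paper. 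To upgrade from expectation to the pathwise $\nc^\sigma$-growth in the proposition, I would use that $|V|$ is Poisson, so $\var(|V|)=\E[|V|]\sim \nc^\sigma$; Chebyshev then gives $|V|/\nc^\sigma\to (\alpha/\sigma)\Gamma(1+c)/\Gamma(c+\sigma)$ in probability, and a Borel--Cantelli argument along the subsequence $\nc_k=\lceil k^{2/\sigma}\rceil$ combined with monotonicity of $|V|$ in $\nc$ promotes this to almost sure convergence.

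The main obstacle is the last step: the summands are independent but not identically distributed and the rates drift, so one must be careful that the Chebyshev bound plus monotonicity really does close the gap between consecutive $\nc_k$ (since $|V|$ can only grow as $\nc$ grows, the sandwich bound $|V(\nc_k)|\le |V(\nc)|\le |V(\nc_{k+1})|$ together with $\nc_k^\sigma/\nc_{k+1}^\sigma\to 1$ is what makes this work). All other steps are routine Gamma-function asymptotics. An alternative, less self-contained route is to simply invoke the power-law result for the total number of dishes in the SB-IBP established by \citet{TehGor2009} and \citet{BroJorPit2012}, which is exactly Equation~\eqref{eqn:zipf}'s companion statement and needs no rederivation here.
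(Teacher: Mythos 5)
Your proposal is correct, and it takes a more self-contained route than the paper. The paper's proof is a one-line appeal to the literature: it identifies $|V|$ with the number of non-zero columns of $Z$ and then cites \citet{BroJorPit2012} and \citet{HeaRoy2015} for the fact that the expected number of such columns grows as $\frac{\alpha}{\sigma}\frac{\Gamma(c+1)}{\Gamma(c+\sigma)}\nc^\sigma$ --- exactly the ``alternative, less self-contained route'' you mention at the end. You instead rederive this from the predictive rule already quoted in Section~\ref{sec:stablebeta}: since the Poisson rate for new vertices at step $\cc$ is deterministic in $\cc$ (it does not depend on $Z_1,\dots,Z_{\cc-1}$), the new-vertex counts are indeed independent Poissons, $|V|$ is Poisson with the summed rate, and your Gamma-ratio and $\sum_{\cc\le\nc}\cc^{\sigma-1}\sim\nc^\sigma/\sigma$ asymptotics recover the same constant. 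Your derivation buys two things the paper's does not: it makes the argument verifiable without chasing references, and the Chebyshev/Borel--Cantelli/monotonicity step upgrades the statement from growth in expectation (all the paper actually establishes) to convergence in probability and almost surely, which is arguably what ``$|V|$ grows as $\nc^\sigma$'' ought to mean. The subsequence-plus-sandwich argument is the standard one and closes correctly here because $|V|$ is non-decreasing in $\nc$ and $\nc_k^\sigma/\nc_{k+1}^\sigma\to 1$ for $\nc_k=\lceil k^{2/\sigma}\rceil$. The only step you assert rather than prove is the independence of the per-row new-vertex counts, but as noted this is immediate from the fact that the predictive rate quoted in the paper is a deterministic function of $\cc$ (or, more structurally, from thinning the Poisson process of atoms of the stable beta process by first-appearance time), so there is no gap.
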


\begin{proof}
This is a direct consequence of the power-law behavior of the SB-IBP: the number of vertices in a graph with $\nc$ generating cliques correspond to the number of non-zero columns in a sample from the SB-IBP with $\nc$ rows. \citet{BroJorPit2012} and \citet{HeaRoy2015} show that the expected number of non-zero columns  grows as $\frac{\alpha}{\sigma}\frac{\Gamma(c+1)}{\Gamma(c+\sigma)}\nc^\sigma$.
\end{proof}

\begin{proposition}\label{prop:edges}
As the number $\nc$ of cliques grows to infinity, the number of edges $|E|$ grows as $O\left( \min(N^{\frac{1+\sigma}{2}}, N^{\frac{3\sigma}{2}})\right)$. Combined with the result in Proposition~\ref{prop:vertices}, this means that the number of edges grows with the number of vertices as $O(\left(\min(|V|^{\frac{1+\sigma}{2\sigma}}, |V|^{3/2})\right)$, implying that the graph is sparse.
\end{proposition}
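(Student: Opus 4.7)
The plan is to compute $\E[|E|]$ directly via Campbell's formula on the underlying Poisson point process of atoms of the stable-beta CRM, and then interpolate between two elementary bounds on the integrand.

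First, conditional on $\atom=\sum_i\atom_i\delta_{\theta_i}$, two distinct atoms with masses $\atom_1,\atom_2$ are joined by an edge iff at least one of the $\nc$ cliques contains both, which happens with conditional probability $1-(1-\atom_1\atom_2)^\nc$. Applying Campbell's formula to the atoms of the CRM (a Poisson point process with intensity $\levy$) gives
\[
\E[|E|] \;=\; \tfrac{1}{2}\iint_{(0,1]^2}\bigl(1-(1-\atom_1\atom_2)^\nc\bigr)\,\levy(d\atom_1)\,\levy(d\atom_2).
\]

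Next I would bound the integrand by the elementary scalar inequality $1-(1-x)^\nc\le \min(\nc x,1)\le (\nc x)^t$, valid for every $x\in[0,1]$ and every $t\in[0,1]$. The double integral then factorises to give
\[
\E[|E|]\;\le\;\tfrac{\nc^t}{2}\Bigl(\int_0^1 \atom^t\,\levy(d\atom)\Bigr)^2.
\]
From the explicit L\'evy density in~\eqref{eqn:levymeasure}, the inner integral is a rescaled Beta integral that is finite precisely when $t>\sigma$, the $\atom^{-\sigma-1}$ singularity at $0$ forcing this constraint.

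Two judicious choices of $t$ then deliver the two advertised rates. Taking $t=(1+\sigma)/2$ is admissible for every $\sigma<1$, since then $\sigma<t\le 1$, and yields $\E[|E|]=O(\nc^{(1+\sigma)/2})$. Taking $t=3\sigma/2$ is admissible whenever $\sigma\le 2/3$ and yields $\E[|E|]=O(\nc^{3\sigma/2})$; in the remaining regime $\sigma>2/3$ the first bound is already tighter than $\nc^{3\sigma/2}$, so the stated minimum holds throughout. Combining with Proposition~\ref{prop:vertices} (namely $|V|=\Theta(\nc^\sigma)$) then rewrites the rates as $|V|^{(1+\sigma)/(2\sigma)}$ and $|V|^{3/2}$, whose minimum is sub-quadratic in $|V|$.

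The main delicate point I anticipate is justifying Campbell's formula for the stable-beta CRM, whose L\'evy measure is $\sigma$-finite but has infinite total mass near $0$. This can be handled by truncating at mass level $\epsilon>0$, applying Campbell in the resulting finite Poisson setting, and passing $\epsilon\to 0$: the $\nc x$ upper bound on the integrand makes the small-mass tail integrable uniformly in $\epsilon$. Once Campbell is in force the remaining interpolation is a one-line calculation.
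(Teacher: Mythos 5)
Your derivation of the identity $\E[|E|]=\tfrac12\iint\bigl(1-(1-\atom_1\atom_2)^\nc\bigr)\levy(d\atom_1)\levy(d\atom_2)$ matches the paper exactly (strictly speaking you need the second-order Campbell/Mecke formula for sums over \emph{pairs} of Poisson points, not the first-order one, but for a Poisson process the second factorial moment measure is the product intensity, so the conclusion is the same, and your truncation remark disposes of the infinite-activity issue). Where you genuinely diverge is in how the asymptotics of that integral are obtained: the paper stops at the identity and outsources the rate $O\bigl(\min(\nc^{(1+\sigma)/2},\nc^{3\sigma/2})\bigr)$ to the Poissonization analysis of Cai, Campbell and Broderick, whereas you prove it directly via the elementary interpolation $1-(1-x)^\nc\le\min(\nc x,1)\le(\nc x)^t$ for $t\in[0,1]$, which factorizes the double integral into $\nc^t\bigl(\int_0^1\atom^t\levy(d\atom)\bigr)^2$ with the moment finite precisely for $t>\sigma$; your choices $t=(1+\sigma)/2$ (valid for all $\sigma<1$) and $t=3\sigma/2$ (valid for $\sigma\le 2/3$, with the first bound dominating when $\sigma>2/3$ since then $(1+\sigma)/2<3\sigma/2$) recover both exponents, and the conversion to $|V|$ via Proposition~\ref{prop:vertices} is the same in both treatments. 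Your route buys a short, self-contained proof of the upper bound that needs nothing beyond Bernoulli's inequality and a Beta integral; the cost is that it yields only an $O(\cdot)$ bound, while the cited Poissonization argument gives the sharper matched growth rate --- but since the proposition as stated only asserts $O(\cdot)$, and sparsity only requires the upper bound on $|E|$ together with the lower bound on $|V|$, your argument fully establishes the claim. (One small caveat: finiteness of $\int_0^1\atom^t\levy(d\atom)$ also uses $c+\sigma>0$ to control the endpoint at $1$, which holds under the paper's parameter range except in the boundary case $c=-\sigma$.)
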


\begin{proof}
Conditioned on the probabilities $\atom_i,\atom_j$ assigned by the stable beta process to vertices $i$ and $j$, the probability of an edge between $i$ and $j$ given $\nc$ cliques is $1-(1-\atom_i\atom_j)^\nc$. By linearity of expectation, the total expected number of edges, given $\atom$, is therefore $\sum_{i>j}\left(1 - (1-\atom_i\atom_j)^\nc\right) $. Therefore, we can obtain the expected total number of edges in the graph as
\begin{equation}
    \E\left[|E|\right] = \frac{1}{2}\int\int \left(1 - (1-wv)^\nc\right) \nu(dw) \nu(dv)\label{eqn:number_of_edges}
\end{equation}
This equation also describes the number of edges in an edge-exchangeable graph based on the stable beta process, proposed by \cite{CaiCamBro2016}. Although the two models are different---our model is not edge-exchangeable, and the \citeauthor{CaiCamBro2016} model is not locally dense---the expected number of edges coincide due to linearity of expection. \citeauthor{CaiCamBro2016} show, via a  Poissonization approach, that the integral in Equation~\ref{eqn:number_of_edges} grows as $O\left(\min(\nc^{\frac{1+\sigma}{2}},\nc^{\frac{3\sigma}{2}})\right)$.\footnote{For results for alternative choices of completely random measure, see also \cite{CaiCamBro2016}.} 
\end{proof}

\begin{corollary}The number of edges grows with the number of vertices as $O(\left(\min(|V|^{\frac{1+\sigma}{2\sigma}}, |V|^{3/2})\right)$, implying that the graph is sparse for all values of $\sigma$.
\end{corollary}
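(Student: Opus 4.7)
The plan is to simply combine the two preceding propositions by substituting the vertex scaling law into the edge bound. From Proposition~\ref{prop:vertices}, $|V|$ scales as $\nc^\sigma$, equivalently $\nc$ scales as $|V|^{1/\sigma}$, as the number of cliques grows. I would plug this into the expression $O(\min(\nc^{(1+\sigma)/2}, \nc^{3\sigma/2}))$ from Proposition~\ref{prop:edges}. The first term becomes $(|V|^{1/\sigma})^{(1+\sigma)/2} = |V|^{(1+\sigma)/(2\sigma)}$, and the second becomes $(|V|^{1/\sigma})^{3\sigma/2} = |V|^{3/2}$. Since the minimum of these two expressions is the natural upper bound on $|E|$, this immediately gives the claimed rate $O(\min(|V|^{(1+\sigma)/(2\sigma)}, |V|^{3/2}))$.

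To obtain sparsity for all $\sigma \in (0,1]$, I would then observe that the exponent $3/2$ is strictly less than $2$ regardless of $\sigma$, so the $|V|^{3/2}$ term alone certifies sub-quadratic growth. For completeness, I might also note the crossover behavior: $(1+\sigma)/(2\sigma) \leq 3/2$ precisely when $\sigma \geq 1/2$, so for $\sigma \geq 1/2$ the rate is $|V|^{(1+\sigma)/(2\sigma)}$ (interpolating from $|V|^{3/2}$ at $\sigma = 1/2$ down to $|V|$ at $\sigma=1$), while for $\sigma < 1/2$ the rate is $|V|^{3/2}$. Either way, $|E| = o(|V|^2)$, which is the definition of sparsity used in Section~\ref{sec:sparsity}.

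There is no real obstacle here; the corollary is essentially a one-line algebraic consequence of Propositions~\ref{prop:vertices} and~\ref{prop:edges} together with the arithmetic observation that $3/2 < 2$. The only thing to be careful about is interpreting the asymptotic relation $|V| \sim (\alpha/\sigma)(\Gamma(c+1)/\Gamma(c+\sigma))\nc^\sigma$ as implying $\nc = \Theta(|V|^{1/\sigma})$, so that the substitution into the $O(\cdot)$ bound is valid up to constants absorbed into the $O$-notation.
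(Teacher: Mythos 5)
Your proposal is correct and takes essentially the same route as the paper, which simply states that the corollary follows directly from Propositions~\ref{prop:vertices} and~\ref{prop:edges}; you have merely spelled out the substitution $\nc = \Theta(|V|^{1/\sigma})$ and the exponent arithmetic that the paper leaves implicit. No issues.
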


\begin{proof}
This follows directly from Propositions~\ref{prop:vertices} and \ref{prop:edges}, and is also noted by \citet{CaiCamBro2016}.
\end{proof}
 As we approach the limit where $\sigma=1$, we obtain the trivially sparse graph where each clique contains a $\Pois(\alpha)$ number of vertices, and there are no edges connecting cliques. In the supplement, we show  simulations that validate these results empirically.

\subsection{LOCAL DENSITY AND CLUSTERING COEFFICIENTS}\label{sec:density}

By construction, each observed vertex appears in at least one of the $\nc$ generating cliques in $Z$, with expected clique size $\alpha$. As the number of generating cliques a vertex appears in grows, so does the expected size of the largest generating cliques (since $\E[\max(X_1,X_2)]\geq \max(\E[X_1], \E[X_2])$). The expected size of the maximal clique for the vertex (i.e.\ the largest clique containing that vertex) will be larger again, since three or more overlapping cliques can introduce triangles that do not appear in any of the individual cliques. This ensures that, even though the overall graph is sparse, there will be locally dense subgraphs, with expected size lower bounded by $\alpha$. We show plots demonstrating this empirically in the supplement.

The local clustering coefficient \citep{WatStr1998} is a popular tool for measuring the local density of a graph. The local clustering coefficient of vertex $i$ is given by $\clusteringcoeff_i = \numtriangles_i/k_i(k_i-1)$, 
where $\numtriangles_i$ is the number of triangles containing vertex $i$, and $k_i$ is the number of neighbors of vertex $i$. In other words, $\clusteringcoeff_i$ is a local measure of the ratio of the number of triangles to the number of potential triangles, given a vertex's neighborhood. 

The local clustering coefficient is a measure of the transitivity of the graph: to what extent the presence of the edges $(i, j)$ and $(j, k)$ imply the presence of an edge $(i, k)$. Intuitively, adding edges in cliques, rather than individually, increases transitivity: knowing that the graph contains edges $(i, j)$ and $(j, k)$ increases the likelihood that there is at least one clique containing all three vertices, implying the presence of the third edge.

\begin{proposition} The average local clustering coefficient remains non-zero as the number of cliques grows to infinity, provided $\sigma>0$.
\end{proposition}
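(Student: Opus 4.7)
The plan is to identify a positive asymptotic fraction of vertices whose local clustering coefficient is bounded away from zero, and in particular to focus on vertices that belong to exactly one generating clique. For such a vertex, its neighborhood consists precisely of the other members of that single clique, and those members are automatically pairwise adjacent by the clique-cover construction.

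First, I would apply the Zipf-law identity~(\ref{eqn:zipf}) for the SB-IBP to obtain $\E[K_{\nc,1}] \sim \alpha\,\Gamma(1+c)/\Gamma(c+\sigma) \cdot \nc^\sigma$, and combine this with Proposition~\ref{prop:vertices}, which gives $\E[|V|] \sim (\alpha/\sigma)\,\Gamma(c+1)/\Gamma(c+\sigma) \cdot \nc^\sigma$. Taking the ratio, the asymptotic fraction of vertices that appear in exactly one clique is $\sigma > 0$.

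Second, I would compute the size distribution of the single clique containing such a vertex by exploiting the Poisson structure of the stable beta CRM. By Slivnyak--Mecke, conditioning on a given atom $\atom_i$ being in exactly one of the $\nc$ cliques does not alter the distribution of the remaining atoms, which still form a Poisson process with intensity $\levy(dw)$. The other members of that single clique are obtained by $w$-thinning this Poisson process (retaining each atom $\atom_j$ independently with probability $\atom_j$), producing a Poisson point process with intensity $w\,\levy(dw)$ whose total count is $\Pois(\int w\,\levy(dw)) = \Pois(\alpha)$. Thus the single clique has size $1 + \Pois(\alpha)$, and in particular contains at least three vertices with probability $1 - e^{-\alpha}(1+\alpha) > 0$, independently of $\nc$.

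Third, for any such vertex $i$ (in exactly one clique of size at least $3$), one checks that $\numtriangles_i = k_i(k_i-1)/2$, so $\clusteringcoeff_i = 1/2$ under the excerpt's normalization. Combining the three steps, the average local clustering coefficient is asymptotically bounded below by $\sigma(1 - e^{-\alpha}(1+\alpha))/2 > 0$, yielding the claim.

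The main obstacle is Step~2: carefully invoking Slivnyak--Mecke and the thinning representation of the stable beta CRM to obtain a clean $\Pois(\alpha)$ distribution for the ``other members'' of the clique, uniformly in $\nc$. A secondary technical point is passing from the ratios of expectations above to a bound on the random average clustering coefficient itself, which follows from concentration of $|V|$, $K_{\nc,1}$, and the conditional clique-size counts around their means as $\nc\to\infty$.
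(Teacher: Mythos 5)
Your proof is correct, but it takes a genuinely different route from the paper's. The paper bounds the clustering coefficient of a vertex belonging to $j$ cliques from below by (roughly) $j^{-1}$, via the worst case in which the vertex is the only common member of its $j$ cliques, and then averages $j^{-1}$ over the limiting Zeta$(1+\sigma)$ law for $j$ to obtain the bound $\zeta(2+\sigma)/\zeta(1+\sigma)>0$. You instead discard all vertices with $j\geq 2$ and observe that the $j=1$ vertices alone suffice: by the Zipf law their limiting fraction of $|V|$ is exactly $\sigma$, their neighborhood is a single clique and hence fully connected, and by Slivnyak--Mecke plus Poisson thinning the clique contains a $\Pois(\alpha)$ number of other members, so a fraction $1-e^{-\alpha}(1+\alpha)$ of them have a well-defined clustering coefficient equal to its maximal value. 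Your argument is the more elementary and, arguably, the more rigorous of the two --- the contributing vertices have a \emph{deterministic} clustering coefficient, so you avoid the paper's heuristic step that the expected coefficient ``varies as $j^{-1}$'' (a statement about a ratio of expectations rather than an expectation of a ratio) --- and it yields an explicit, $\alpha$-dependent lower bound $\sigma\left(1-e^{-\alpha}(1+\alpha)\right)$ (up to the factor $1/2$ vs.\ $1$ coming from the normalization of $\clusteringcoeff_i$, where you are right that the paper's displayed formula $\numtriangles_i/k_i(k_i-1)$ gives $1/2$ for a clique even though the text asserts $1$). The trade-off is that the paper's bound accounts for the contribution of all vertices and is independent of $\alpha$, while both arguments share the same residual gap you correctly flag: passing from ratios of expectations to the random average requires a concentration argument that neither you nor the paper supplies. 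Your bound also degenerates as $\sigma\to 0$, correctly reflecting the proviso $\sigma>0$.
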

\begin{proof}
Let $j$ be the number of cliques to which a given vertex belongs. Vertices belonging to a single clique have a local clustering coefficient of 1. As $j$ increases, the expected local clustering coefficient will decrease, and will be minimized when it is the only common member of those $j$ cliques. In this extreme setting, the clustering coefficient will vary as $j^{-1}$ (since the vertex's expected neighborhood size grows as $j\alpha$, and the expected number of triangles as $j\alpha^2$). For large $j$ and $\nc$, the distribution over the number of cliques, $j$, that a vertex belongs to can be approximated by a Zeta($1+\sigma)$ distribution (Equation~\ref{eqn:zipf}), and the expectation of $j^{-1}$ under this distribution is $\zeta(2+\sigma)/\zeta(1+\sigma)>0$ for $\sigma \in (0,1)$. This implies that the average clustering coefficient will not tend to zero as the graph grows.
\end{proof}

\subsection{DEGREE DISTRIBUTION}\label{sec:degreedist}

The expected size of each generating clique is $\alpha$, and the number of cliques an vertex belongs to follows a power law distribution (Equation~\ref{eqn:zipf}). This leads to a power law distribution over the degree, with the expected degree decreasing with $\sigma$ and increasing with $\alpha$.

The left hand column of Figure~\ref{fig:degree_dist_1} shows the degree distribution (in the form of a boxplot) for different values of $\sigma$ and $\alpha$, and different numbers of cliques $\nc$. We see increasingly heavy tails as $\sigma$ increases. 

Since the number of vertices also increases with $\sigma$, the maximum degree is higher for larger $\sigma$. The right hand column of Figure~\ref{fig:degree_dist_1} shows the degree divided by the total number of vertices, making it easier to compare different values of $\sigma$. We see that the average proportion decreases and the distribution grows heavier tailed as $\sigma$ increases, for all values of $\alpha$ and $\nc$.
\begin{figure}[t!]
  \centering
  \begin{subfigure}{0.45\columnwidth}
    \includegraphics[width=\textwidth]{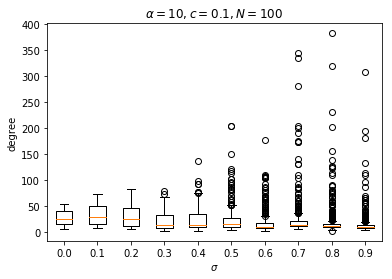}
  \end{subfigure}
    \begin{subfigure}{0.45\columnwidth}
    \includegraphics[width=\textwidth]{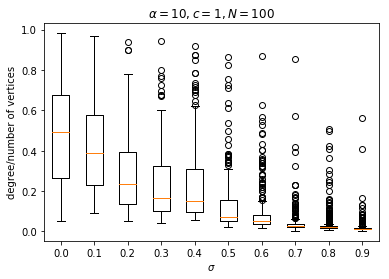}
  \end{subfigure}\\
    \begin{subfigure}{0.45\columnwidth}
    \includegraphics[width=\textwidth]{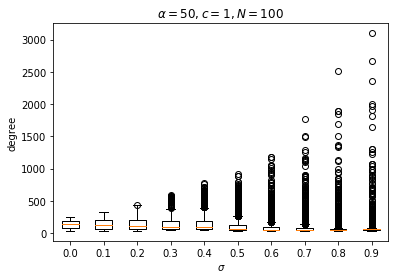}
  \end{subfigure}
    \begin{subfigure}{0.45\columnwidth}
    \includegraphics[width=\textwidth]{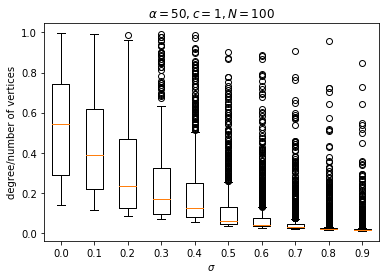}
  \end{subfigure}\\
  \begin{subfigure}{0.45\columnwidth}
    \includegraphics[width=\textwidth]{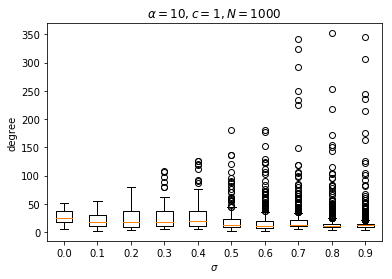}
  \end{subfigure}
    \begin{subfigure}{0.45\columnwidth}
    \includegraphics[width=\textwidth]{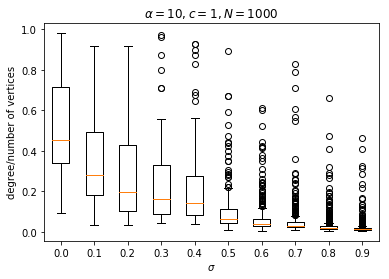}
  \end{subfigure}\\
  \caption{Vertex degrees for different values of $\sigma$, $\alpha$ and $\nc$, for $c=1$. Left: Raw degrees; Right: degree/number of vertices.}
  \label{fig:degree_dist_1}
\end{figure}

\subsection{DENSITY OF THE INTERSECTION GRAPH OF GENERATING CLIQUES}\label{sec:cliquegraph}

The clique graph of a graph is the intersection graph of its maximal cliques \citep{Hamelink:1968}. Since our construction does not explicitly generate maximal cliques, we consider the generating-clique graph---i.e.\ the intersection graph of the generating cliques specified in $Z$. Perhaps surprisingly given the sparsity of the overall graph, this intersection graph is dense.
\begin{proposition}
The intersection graph of the maximal cliques is dense, and the expected number of edges connecting two generating cliques is $\alpha \frac{1-\sigma}{1+c}$.
\end{proposition}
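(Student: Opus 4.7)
The plan is to split the claim into two parts: the closed-form expression for the expected intersection size, and the quadratic growth of the expected edge count of the intersection graph on the $\nc$ generating cliques. Both reduce to moment computations for the underlying completely random measure.

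First I would condition on the atomic measure $\atom = \sum_i \atom_i \delta_{\theta_i} \sim \stablebeta(c,\sigma,\alpha)$. Because vertex $i$ is included in each clique independently with probability $\atom_i$, for any two distinct cliques $\clique_n,\clique_m$,
$$\E\bigl[\,|\clique_n \cap \clique_m|\,\big|\,\atom\,\bigr] \;=\; \sum_i \atom_i^2.$$
Taking the outer expectation via Campbell's theorem gives $\E\bigl[\,|\clique_n \cap \clique_m|\,\bigr] = \int_0^1 \atom^2\,\levy(d\atom)$. Plugging in the L\'{e}vy measure from Equation~\ref{eqn:levymeasure} turns this into a constant multiple of the beta integral $B(2-\sigma,\,c+\sigma)$, and the gamma factors collapse via $\Gamma(2-\sigma)=(1-\sigma)\Gamma(1-\sigma)$ and $\Gamma(2+c)=(1+c)\Gamma(1+c)$ to the claimed $\alpha(1-\sigma)/(1+c)$. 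I read ``edges connecting two generating cliques'' as counted with multiplicity, i.e.\ the number of shared vertices, which is what this moment computes.

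For the density statement I need to show that the probability two generating cliques share at least one vertex is a positive constant independent of $\nc$. Writing $\prod_i(1-\atom_i^2) = \exp\bigl(\sum_i\log(1-\atom_i^2)\bigr)$ and applying the Laplace functional of the underlying Poisson point process gives
$$\E\!\left[\,\prod_i (1-\atom_i^2)\,\right] \;=\; \exp\!\left(-\int_0^1 \atom^2\,\levy(d\atom)\right) \;=\; \exp\!\left(-\alpha\tfrac{1-\sigma}{1+c}\right),$$
so $\Prob(|\clique_n \cap \clique_m|\geq 1) = 1 - e^{-\alpha(1-\sigma)/(1+c)} > 0$. The expected number of edges in the intersection graph on its $\nc$ vertices is then $\binom{\nc}{2}\bigl(1-e^{-\alpha(1-\sigma)/(1+c)}\bigr)$, which is quadratic in the number of vertices, so the intersection graph is dense.

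The main obstacle is merely checking integrability and the validity of Campbell's identity and the Laplace functional in this parameter regime. Near $0$, $\atom^2\,\levy(d\atom)$ behaves like $\atom^{1-\sigma}\,d\atom$, which is integrable for any $\sigma<2$; near $1$, $(1-\atom)^{c+\sigma-1}$ is integrable since $c+\sigma>0$ is already required for Equation~\ref{eqn:levymeasure} to be well-defined. A minor boundary case is $\sigma=1$, where the formula correctly vanishes, matching the trivial disjoint-cliques limit already noted in Section~\ref{sec:sparsity}.
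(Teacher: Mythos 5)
Your proof is correct and its first half is exactly the paper's argument: condition on the atoms, note that the expected overlap of two cliques is $\sum_i\atom_i^2$, and apply Campbell's theorem with $f(\atom)=\atom^2$ to the L\'{e}vy measure. (The paper's displayed integral is written as $\int_0^1\levy(d\atom)$, omitting the $\atom^2$ factor that its own invocation of Campbell's theorem requires; your version with the beta integral $B(2-\sigma,c+\sigma)$ is the computation actually intended, and it does reduce to $\alpha(1-\sigma)/(1+c)$.) Where you genuinely go beyond the paper is the density claim: the paper simply asserts that a positive expected overlap means two cliques intersect with positive probability and hence the intersection graph is dense, whereas you make this precise by computing $\Prob(|\clique_n\cap\clique_m|\geq 1)=1-\exp\left(-\alpha\tfrac{1-\sigma}{1+c}\right)$ via the Poisson process probability generating functional, and then summing over the $\binom{\nc}{2}$ pairs. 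This closes a small logical gap---positive expectation alone does not immediately give a pairwise intersection probability bounded away from zero uniformly in $\nc$---and yields an explicit constant for the edge density of the intersection graph, at the modest cost of one extra functional identity. Your integrability checks and the remark that the formula degenerates correctly at $\sigma=1$ are also sound.
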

\begin{proof}
The expected number of vertices by which two cliques overlap is given by $\sum_{i=1}^\infty \atom_i^2$. Campbell's theorem tells us that $E\left[\sum_{x\in \Pi}f(x)\right] = \int_S f(x)\nu(dx)$, where $\Pi$ is a Poisson process on $S$ with rate measure $\nu(dx)$---in our case, the L\'{e}vy measure $\nu(d\atom)$ of the stable beta process (Equation~\ref{eqn:levymeasure})---and $f$ is a measurable function. Therefore the expected overlap between two cliques is
$$\int_0^1\nu(d\atom)= \alpha\frac{1-\sigma}{1+c}.$$

This indicates that two cliques overlap with positive probability, meaning that the resulting intersection graph is dense and that there is a path between any two vertices with positive probability, even if they do not belong to the same clique.
\end{proof}

\section{RELATED WORK}\label{sec:related}

The idea of a random edge clique cover was explored by \citet{Bar2008}, under the name Clique Matrices. As in this paper, a binary matrix $Z$ is used to represent cliques within the graph; however the number of vertices is fixed and the entries are i.i.d.\ Bernoulli random variables. Such a graph is a special case of the model proposed here, but it lacks the sparsity, power-law degree, and unbounded number of vertices that are discussed in Section~\ref{sec:properties}. \citet{Bar2008} also proposed a noisily-observed Clique Matrix model, where two vertices are connected with probability $\sigma(Z^TZ)$ (where $\sigma(\cdot)$ represents the sigmoid function). Unlike the noisy-OR formulation for partially observed cliques proposed in Section~\ref{sec:ext}, this model precludes sparse graphs, since it allows edges between vertices with no cliques in common. 

Kronecker graphs \citep{Leskovec:Chakrabarti:Kleinberg:Faloutsos:Ghahramani:2010} are a family of random graph models that construct adjacency matrices by starting with a low-dimensional initiator matrix, and iteratively applying the Kronecker product. The resulting deterministic (given the initiator matrix) binary graph can be used to generate a stochastic Kronecker graph, by randomly including edges from the binary graph. Kronecker graphs exhibit a number of realistic graph properties. In particular, their degree distributions follow a multinomial distribution and can exhibit power law behavior depending on the initiator matrix, and for certain parametrizations they yield sparse graphs. However,  they do not generally yield high clustering coefficients \citep{Pinar:Seshadhri:Koldar:2012,Durak:Pinar:Kolda:Seshadhri:2012}, as we see in our experimental analysis (Section~\ref{sec:exp_full}). Further, even in the stochastic setting, Kronecker graphs make very strong self-similarity assumptions about the graph, which may be overly restrictive.

Neural networks have been used to learn generative models for sampling from random graphs \citep{YouYinRenHamLes2018,SimKom2018,LiVinDyPasBat2018}. Typically, these highly parametrized models are learned using multiple graphs.  The sparsity properties of these graphs have not been explored in the literature.

Preferential attachment models \citep{Barabasi:Albert:1999,Albert:Barabasi:2002} are another family of random graphs yield graphs with  power-law degree distribution, where graphs are constructed by adding edges according to a rich-get-richer rule. However, in standard preferential attachment models, the average clustering coefficient tends to zero as the graph grows \citep{Albert:Barabasi:2002,Bollobas:Riordan:2003}. This can be avoided if we have spatial or temporal information about our graph, by incorporating spatial information into the attachment rule \citep{Flaxman:Frieze:Vera:2007,Cooper:Frieze:Pralat:2012,Jacob:Morters:2015}.

\citet{CarFox2017} introduce a family of random graphs or multigraphs parametrized by a generalized gamma process on the space $\mathcal{V}$ of potential vertices, and create either a multigraph by sampling edges as $\mbox{Poisson}(\pi_i\pi_j)$, or a graph by sampling edges as $\mbox{Bernoulli}\left(1-\exp\{-\pi_i\pi_j\}\right)$ (i.e. truncating the multigraph). This can be generalized further by specifying the model in terms of a function $W$ applied to a Poisson process on the space $\mathcal{V}\times\mathbb{R}_+$ \citep{VeiRoy2015,BorChaHol2018}. Under certain choices of $W$ (including those corresponding to the original \citet{CarFox2017} paper), the resulting graphs are sparse and exhibit power law degree distribution. This framework has been extended to incorporate clustering structure \citep{HerSchMor2016,TodMisCar2016}.

The edge-exchangeable graph framework \citep{CaiCamBro2016,CraDem2017} constructs multigraphs that are based on a latent atomic measure $\mathcal{M} = \sum_i\pi_i \delta_{\theta_i}$, where edges can be thought of arriving sequentially according to a distribution parametrized by $\mathcal{M}\times\mathcal{M}$. Under certain conditions on $\mathcal{M}$, the number of vertices grows sub-quadratically with the number of edges as the sequence progresses, and the graph has power law degree distribution. In the specific case where $\mathcal{M}$ is a normalized generalized gamma process, the edge-exchangeable graph is equivalent to the Caron \& Fox multigraph when conditioned on the number of edges. Structure can be incorporated using a mixture of edge-exchangeable graphs \citep{Wil2016}.

Due to their exchangeable construction, neither the sparse exchangeable graphs nor the edge-exchangeable graphs are able to obtain graphs with the sort of clique structure shown here.  This is because the exchangeable constructions do not allow transitivity: conditioned on the latent representations, the presence of edges $(i, j)$ and $(j,k)$ does not increase the probability of the edge $(i, k)$ also being present. In sparse exchangeable setting, while the resulting graphs will contain dense subgraphs, most of the vertices will not belong to such a subgraph \cite{BorChaHol2018,VeiRoy2015}.

One variant of the edge-exchangeable graphs that is particularly relevant to our work is the multiple edges per step edge-exchangeable graph proposed by \cite{CaiCamBro2016}. Here, at each step a random number of edges are added, with edge $(i,j)$ selected with probability $\mu_i\mu_j$ independently of all other edges, leading to multiple edges per step.  If---as is given as an example in \citet{CaiCamBro2016}---the $\mu_i$ are the atom sizes of a stable beta process, then the expected total number of edges and vertices, and the expected number of edges between two vertices conditioned on their associated $\mu_i$, coincide with the model described in this paper, due to linearity of expectation. The difference arises because the multiple edge-exchangeable model generates edges \textit{independently} at each step. In our model, within each clique, the edges are dependent -- giving rise to dense subgraphs. Despite having similar expected sparsity, the multi-step edge-exchangeable model does not have the same local density properties as our model. Further, by representing edge probabilities using a product of beta stable processes, this edge-exchangeable model model lacks the conjugacy that is present when we represent clique inclusion probabilities using a single stable beta process. As a result, posterior inference in the multiple-edge stable beta model poses a significant challenge, and has not yet been addressed in the literature.

\section{EXPERIMENTAL EVALUATION}

We explore properties of our distribution over graphs in both the fully observed and the partially observed setting. We infer the clique matrix underlying a random graph using a reversible jump MCMC algorithm that proposes either splitting or merging cliques. In the partially observed setting, we augment this split/merge procedure with Gibbs steps that help learn the fine-grained structure of the clique cover; we found that adding Gibbs steps had little benefit in the fully observed setting. Full inference details are given in the supplement. 

In our experiments, we optimize the hyperparameters using gradient descent; an alternative approach would be to place a prior on them and infer their posterior distribution using Metropolis-Hastings proposals. 

\subsection{FULLY OBSERVED GRAPH SETTING}\label{sec:exp_full}

We begin by ascertaining how well the fully observed model can capture statistical properties of real-world graphs. We consider the following graph statistics: ratio of triangles to vertices; density $D= 2|E|/|V|(|V|-1)$; average degree; average maximal clique per vertex; average local clustering coefficient. We also look at the degree distribution and the distribution over the maximal clique per vertex. We use the \texttt{NetworkX} python package \citep{HagSwaChu2008} to calculate clustering coefficients and maximal cliques.

\begin{table}[h!]
    \centering
    \footnotesize
    \begin{subtable}{\columnwidth}
        \centering
        \begin{tabular}{c|r|c|c|c}
             &  Truth &  RCC & BNP & Kron  \\  \midrule
            triang./vertex & 1.34 &  $1.99 (.13)$  & $0.01 (.01)$ & $1.21 (.06)$ \\
            density ($\times $1k) & 1.29 &  $1.74 ( .01)$ & $1.28 ( .01)$ & $5.03 (.11) $\\
            av. degree & 4.16 & $4.40 (.14)$ & $3.47(.18)$ & $7.70 (.09)$ \\
            max. clique & 3.50  & $4.20 (.09)$ & $2.03(.01)$ & $2.49 (.03) $\\
            cluster. coeff. & 0.59 & $0.69 ( .02)$  & $0.00(.00)$ & $0.06 (.01) $\\
                    \end{tabular}
        \caption{NeurIPS collaborations.}
    \end{subtable}

    \begin{subtable}{\columnwidth}
        \centering
        \begin{tabular}{c|r|c|c|c}
             & Truth &  RCC & BNP & Kron \\ \midrule
            triang./vertex & 1.91 & $4.14 (.27)$  & $0.01 (.01)$ & $0.01 (.01)$ \\
            density ($\times $1k)  & 2.00  &  $2.70 (.01)$ &  $2.00 (.00)$ & $0.75(.08)$\\
            av. degree &  4.57 & $6.05 (.17)$ & $4.58 (.12)$  & $2.60 (.02)$\\
            max. clique &  4.28 &  $5.52 (.14)$ &  $2.03(.01)$ & $2.03 ( .01)$\\
            cluster. coeff. &  0.81 & $0.78 (.02)$ & $0.00(.00)$ & $0.00(0.01)$\\
        \end{tabular}
        \caption{IMDB comedies.}
    \end{subtable}
    
    \begin{subtable}{\columnwidth}
    \centering
    \begin{tabular}{c|r|c|c|c}
         & Truth &  RCC & BNP & Kron  \\ \midrule
         triang./vertex & 9.21 & $3.25 ( .40)$  &  $0.07 (.01)$ &  $0.01 ( .01)$\\
        density ($\times $1k) & 1.06  & $1.29 ( .01)$ &  $1.07 ( .00)$ & $0.47 ( .01)$\\
        av. degree & 5.53  & $6.53(.47)$ & $5.61(.16)$ & $3.25 ( .01)$\\
        max. clique &  5.13 & $4.64(.15)$ & $2.11(.01)$ & $2.03( .01)$\\
        cluster. coeff. & 0.56 & $0.64 (.02)$ & $0.00(.01)$ & $0.00 (.01)$\\
    \end{tabular}
    \caption{ArXiv GR-QC.}
    \end{subtable}
    
    \begin{subtable}{\columnwidth}
    \centering
    \begin{tabular}{c|r|c|c|c}
         & Truth &  RCC & BNP & Kron  \\ \midrule
        triang./vertex & 2.76 &  $2.26 ( .53)$  & $2.27 (.80)$ & $2.71 (.11)$ \\
        density ($\times $1k)  & 6.30  &  $7.58 (.53)$   & $6.45 (.44)$ & $6.05 (.12)$ \\
        av. degree & 7.33  & $7.58 ( .64)$ & $7.62 ( .68)$ & $9.67 ( .09)$\\
        max. clique &  3.17 & $3.21 (.10)$ & $2.58 (.01)$ & $2.74 (.04)$\\
        cluster. coeff. & 0.22 & $0.25 ( .02)$ & $0.05 (.01)$ & $0.09(.01)$\\
    \end{tabular}
    \caption{ENRON emails.}
    \end{subtable}
    
    \begin{subtable}{\columnwidth}
    \centering
    \begin{tabular}{c|c|c|c|c}
         & Truth &  RCC & BNP & Kron  \\ \midrule
        triang./vertex &  $356$ & $211 (71)$ & $35.1 (7.0)$ & $4.77 ( .50)$ \\
        density ($\times $1k)  & 10.8 &  $12.0 ( .58)$   & $11.1 (.41)$ & \;\,$8.7 (.04)$ \\
        av. degree & 43.7  & $48.0 ( 2.6)$ & $44.9 ( 2.3)$ & $ 35.5 (.01)$\\
        max. clique &  15.9 &  $7.27 (.50)$ &  $3.63 (.09)$  &  $3.11 (.03)$ \\
        cluster. coeff. & 0.60 & $0.42 ( .04)$ & $0.05 (.01)$ & $ 0.02 (.01) $\\
    \end{tabular}
    \caption{Facebook circles.}
    \end{subtable}
    \centering
    
    \caption{Empirical evaluation of statistical properties of random graphs generated from the parameters learned from real-world graphs (standard error in parentheses). The parameters are taken from posterior MCMC draws.}
    \label{tab:experiments}
\end{table}

\begin{figure}[h!]
    \centering
    \begin{subfigure}{\linewidth}
        \centering
        \begin{subfigure}{.45\linewidth}
            \includegraphics[width=\linewidth]{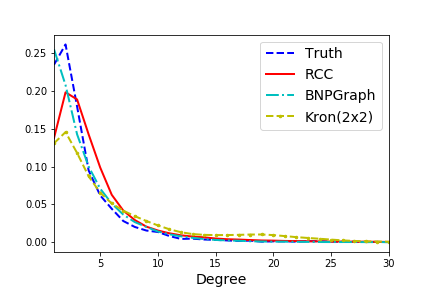}
        \end{subfigure}
        \begin{subfigure}{.45\linewidth}
            \centering
            \includegraphics[width=\linewidth]{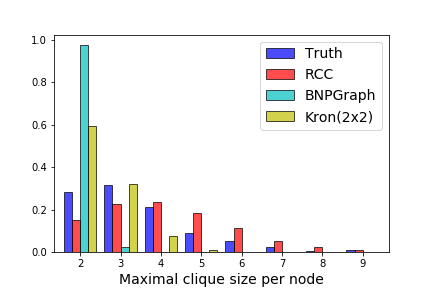}
        \end{subfigure}
        \caption{NeurIPS collaborations (1987-2003).}
    \end{subfigure}
    \begin{subfigure}{\linewidth}
        \centering
        \begin{subfigure}{.45\linewidth}
            \includegraphics[width=\linewidth]{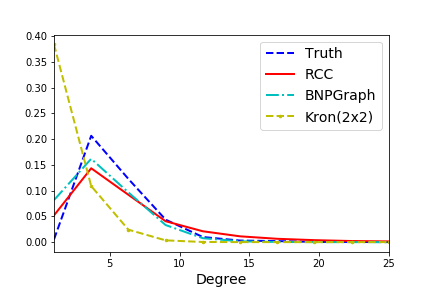}
        \end{subfigure}
        \begin{subfigure}{.45\linewidth}
            \centering
            \includegraphics[width=\linewidth]{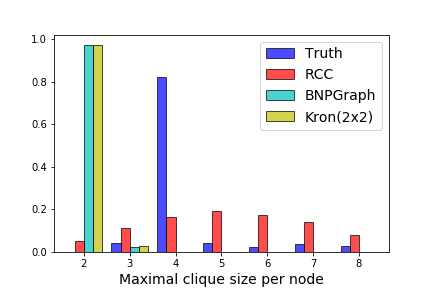}
        \end{subfigure}
        \caption{IMDB comedies cast (2000-2002).}
    \end{subfigure}
    \begin{subfigure}{\linewidth}
        \centering
        \begin{subfigure}{.45\linewidth}
            \includegraphics[width=\linewidth]{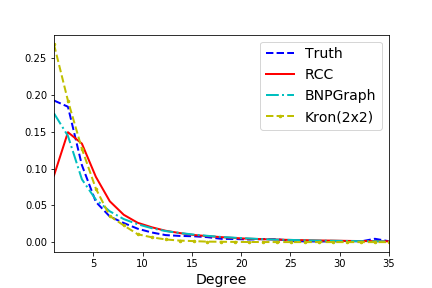}
        \end{subfigure}
        \begin{subfigure}{.45\linewidth}
            \centering
            \includegraphics[width=\linewidth]{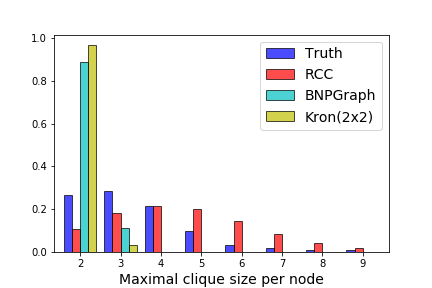}
        \end{subfigure}
        \caption{ArXiv GR-QC (up to 2003).}
    \end{subfigure}
  \begin{subfigure}{\linewidth}
        \centering
        \begin{subfigure}{.45\linewidth}
            \includegraphics[width=\linewidth]{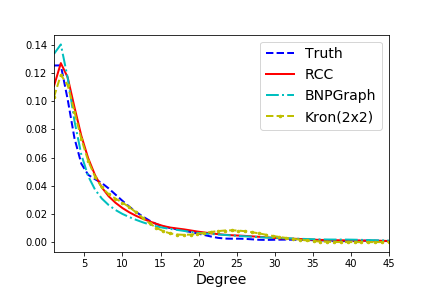}
        \end{subfigure}
        \begin{subfigure}{.45\linewidth}
            \centering
            \includegraphics[width=\linewidth]{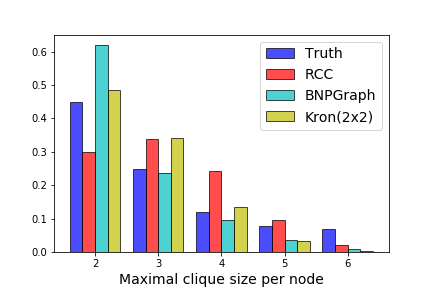}
        \end{subfigure}
        \caption{ENRON (1 month of data)}
    \end{subfigure}
  \begin{subfigure}{\linewidth}
        \centering
        \begin{subfigure}{.45\linewidth}
            \centering
            \includegraphics[width=\linewidth]{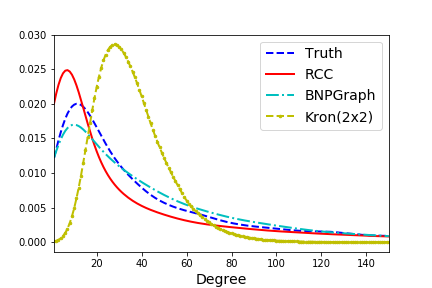}
        \end{subfigure}
           \begin{subfigure}{.45\linewidth}
            \centering
            \includegraphics[width=\linewidth]{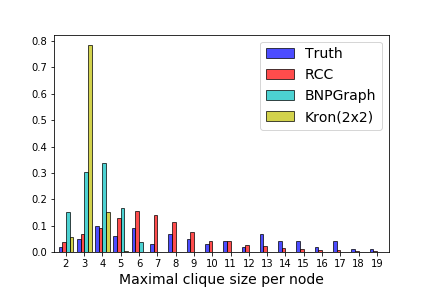}
        \end{subfigure}
        \caption{Facebook circles \citep{facebooknetwork}}
    \end{subfigure}
    \caption{Empirical degree distributions and maximal clique distributions.}\label{fig:degree_and_clique}
\end{figure}

\begin{table*}[ht!]
\centering
\begin{tabular}{ccp{.75\linewidth}}
\toprule
Clique    &  Mean dist. &  Authors                                                       \\ \midrule
     1    &   1.61      & Weare(2), Platt(2), Burges(2), Liblit(1), Aiken(1), Zheng(1), Swenson(2), Crisp(3)\\ 
     2    &   1.86      & Sundararajan(3), Ghaoui(1), Bhattacharyya(1), Lanckriet(1), Keerthi(2), Nilim(2) \\
     3    &   1.91      &  Kim(1), Sastry(1), Xing(1), Weiss(1), Yedidia(2), Ng(1), Russell(1), Freeman(2), Karp(1), Yanover(2) \\
     4    &   2.16     &  Fukumizu(1), Amari(2), Bartlett(1), Wu(3), McAuliffe(1), Nakahara(3), Murata(3), Akaho(2), Bach(1) \\
     5    &   1.82      &  Blei(1), Stromsten(2), DeSilva(2), Kemp(2), Ng(1), Steyvers(2), Griffiths(1), Danks(2), Tenenbaum(1), Sanjana(2) \\
     6    &   2.36      &  Chichilnisky(3), Nguyen(1), Simoncelli(2), Pillow(3), Schwartz(3), Wang(3), Wainwright(1), Todorov(1), Paninski(3) \\ \bottomrule
\end{tabular}
\caption{Meta-analysis of the six cliques of Michael Jordan. Shortest-path distance to Michael Jordan is shown in parentheses; connections not present in the original graph are shown in red. Mean dist. is the mean path length (along the original graph) between clique members.}
\label{tab:partial_cliques}
\end{table*}

To determine how well-suited our model is for modeling real graphs, we consider four real-world graphs,
\begin{itemize}
\item The co-authorship graph for NeurIPS publications from 1987-2003 (2715 vertices, 4733 edges).\footnote{\url{https://www.kaggle.com/benhamner/nips-papers}}
\item The largest connected component of the co-appearance graph generated from IMDB cast lists of comedies from 2000-2002 (2288 vertices, 5232 vertices).\footnote{\url{www.imdb.com}}
\item The co-authorship graph for arXiv publications in the category GR-QC (General Relativity and Quantum Cosmology) from 2003 (5241 vertices, 14484 edges) \citep{LesKleFal2007}.
\item The full interaction graph from a single month (January 2000) of the ENRON dataset (1172 vertices, 4293 edges) \citep{KliYa2004}.
\item Anonymous data collected from Facebook social circles,\footnote{The computation of the maximal clique per vertex for the Facebook social circles was omitted due to the size of this graph.} comprising 10 interconnected friend lists of people obtained using an external app (4039 vertices, 88234 edges) \citep{facebooknetwork}.
\end{itemize}

We model each graph using our proposed (fully observed) random clique cover model---labeled RCC in tables and figures---and learn the hyperparameters using maximum likelihood. We then sample 25 graphs using these hyperparameters, and compare the average statistics of the sampled graph with those of the original graph in Table~\ref{tab:experiments} and Figure~\ref{fig:degree_and_clique}. We consider two comparison models:
\begin{itemize}
    \item \textbf{BNPgraph}: The sparse exchangeable model of \citet{CarFox2017} (implemented using code published by the authors), which can capture graph sparsity,   power-law degree distribution, and an unbounded number of vertices. 
    \item \textbf{Kron}: A stochastic Kronecker graph, with 2$\times$2 initiator matrix inferred and sampled using the \texttt{kronfit} and \texttt{krongen} software packages introduced by \citet{Leskovec:Chakrabarti:Kleinberg:Faloutsos:Ghahramani:2010}. 2$\times$2 initiator matrices have been shown to be a good fit for real-world graph models \citep{Leskovec:Chakrabarti:Kleinberg:Faloutsos:Ghahramani:2010}.
\end{itemize}

We see that both the Caron \& Fox model and our model models do a good job at capturing sparsity and degree distribution. However, the Caron \& Fox model does not capture the locally dense structure. While simulations from our model have similar clustering coefficients to the real graphs, the Caron \& Fox model has near-zero clustering coefficients. Our simulations also exhibit larger maximal cliques. While the maximal clique distribution for the IMDB dataset seems to overestimate the number of cliques, we believe this is an artifact of the data. The dataset only includes the four highest billed actors for each movie, artificially deflating clique sizes. The Kronecker graph generally does a worse job at capturing the graph sparsity and degree, which we hypothesize is due to the restrictive self-similarity of the underlying graph.

We further inspected the differences between models by looking at the adjacency matrices and visual representations of the random graphs generated from the best-fitted parameters. Figure \ref{fig:qualitcomparison} shows the results. For ease of visualization, we only include the top-left $100\times 100$ block of the adjacency matrices; the full matrices are included in the supplement. The sparse exchangeable graph is sparse everywhere with no community structure, whereas the Kronecker graph shows a structure unrelated to the original graph. In contrast, our model produces a graph that looks more similar to the original graph, and with a similar community structure shown by the adjacency matrix. We used the Python package \texttt{NetworkX} \citep{HagSwaChu2008} to draw the graphs with a spring layout; we also used its implementation of the Girvan-Newman algorithm \citep{Girvan:Newman:2002} to arrange the adjacency matrices by communities.

\begin{figure}[h!]
    \centering
    \begin{subfigure}{\linewidth}
        \centering
        \begin{subfigure}{.59\linewidth}
            \includegraphics[width=\linewidth]{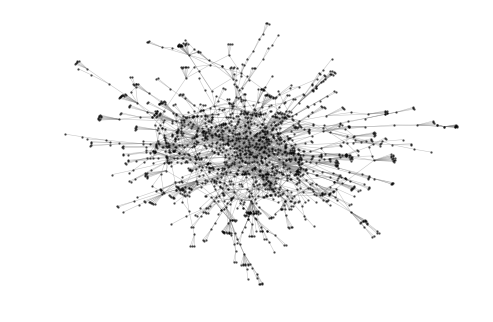}
        \end{subfigure}
        \begin{subfigure}{.39\linewidth}
            \centering
            \includegraphics[width=\linewidth]{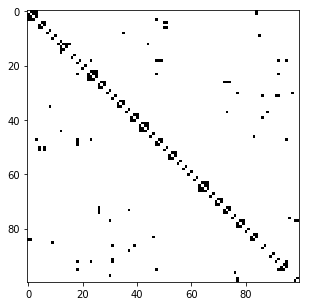}
        \end{subfigure}
        \caption{Original NeurIPS graph.}
    \end{subfigure}
    \begin{subfigure}{\linewidth}
        \centering
        \begin{subfigure}{.59\linewidth}
            \includegraphics[width=\linewidth]{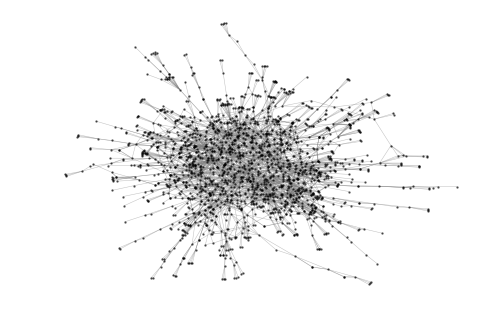}
        \end{subfigure}
        \begin{subfigure}{.39\linewidth}
            \centering
            \includegraphics[width=\linewidth]{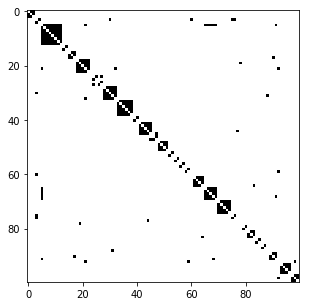}
        \end{subfigure}
        \caption{Random Clique Cover (RCC).}
    \end{subfigure}
    \begin{subfigure}{\linewidth}
        \centering
        \begin{subfigure}{.59\linewidth}
            \includegraphics[width=\linewidth]{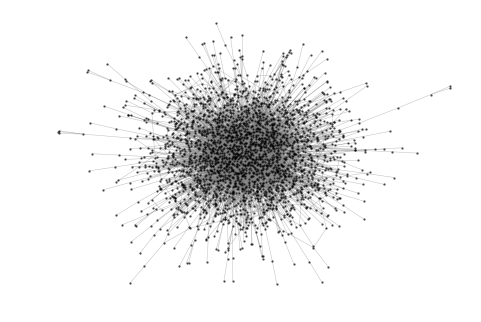}
        \end{subfigure}
        \begin{subfigure}{.39\linewidth}
            \centering
            \includegraphics[width=\linewidth]{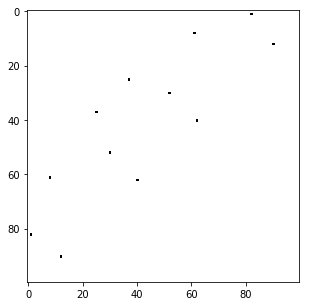}
        \end{subfigure}
        \caption{Sparse exchangeable graph (BNPgraph).}
    \end{subfigure}
    \begin{subfigure}{\linewidth}
        \centering
        \begin{subfigure}{.59\linewidth}
            \includegraphics[width=\linewidth]{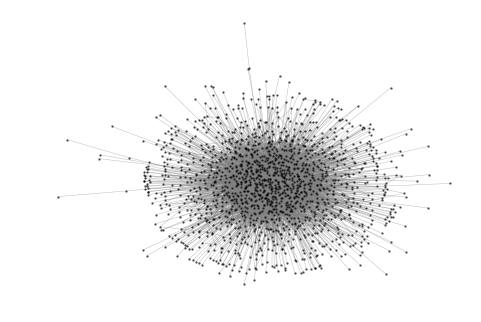}
        \end{subfigure}
        \begin{subfigure}{.39\linewidth}
            \centering
            \includegraphics[width=\linewidth]{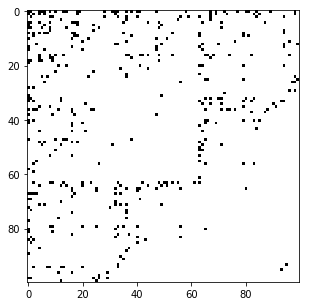}
        \end{subfigure}
        \caption{Kronecker graph with 2$\times$2 initiator matrix (Kron).}
    \end{subfigure}
    \caption{Co-authorship  graph  for NeurIPS  publications (1987-2003) and graphs sampled using (b) our model, and (c, d) comparison methods. Left: graph visualization; right: top-left 100x100 block of the adjacency matrix arranged by communities. }\label{fig:qualitcomparison}
\end{figure}

\subsection{PARTIALLY OBSERVED GRAPH SETTING}\label{sec:exp_partial}
\begin{figure}[h!]
    \centering
    \begin{subfigure}{.7\linewidth}
        \includegraphics[width=\linewidth]{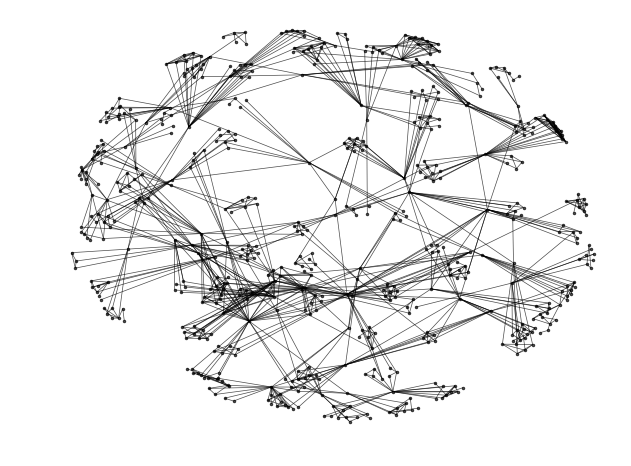}
        \caption{Original edges.}
    \end{subfigure}\hfill
    \begin{subfigure}{.7\linewidth}
        \includegraphics[width=\linewidth]{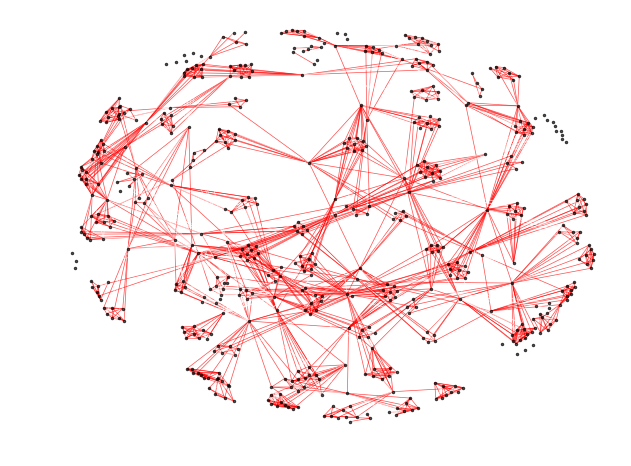}
        \caption{Inferred latent edges.}
    \end{subfigure}\hfill
   \begin{subfigure}{.7\linewidth}
    \includegraphics[width=\linewidth]{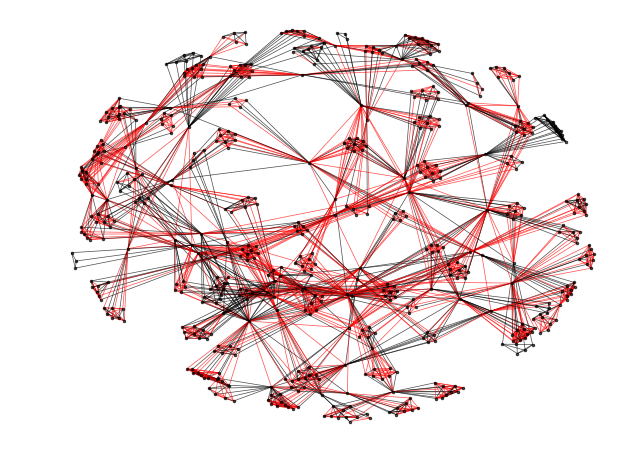}
    \caption{Full inferred latent graph.}
    \end{subfigure}
    \caption{Partially observed graph inference for NeurIPS co-authorships (1999-2003), edges appearing in the latent cliques but not the original graph are shown in red}
    \label{fig:nips-orig-partial}
\end{figure}

The partially observed setting offers the ability to infer a latent graph---and corresponding set of latent cliques---underlying an observed graph. Intuitively, this latent graph is likely to capture latent community-type structure. To evaluate this empirically, we modeled the NeurIPS co-authorship dataset described above using a partially observed model. We consider publications within the period 1999-2003, assume a shared clique parameter $\pi$, and infer hyperparameters using Metropolis-Hastings.  Figure \ref{fig:nips-orig-partial} shows the latent structure found (based on a single sample from the posterior). 

Posterior draws for $\pi$ concentrated around 0.4, robustly with respect to hyperparameter specification. Thus, roughly 60\% of the edges in the inferred graph are not in the original graph. As expected given the short period of analysis, 491 out of the 553 authors in consideration belonged to only one clique, and only eleven of them to more than three. The average clique size was 8.93. The author belonging to the most cliques was (perhaps unsurprisingly) Michael Jordan. The members of the six cliques he belonged to are shown in Table~\ref{tab:partial_cliques}. It is worth mentioning that case-by-case inspection show that in some cases, inferred edges were in fact collaborations outside the period of observation. 
\section{DISCUSSION}

We have presented a new class of Bayesian nonparametric prior for graphs, based on random clique selection mechanisms, that are appropriate for many real-world graphs. We have presented some preliminary work in a modeling context; we hope to see this form of random graph-based model explored further in future.

While we base our model on the stable beta process, alternative subset selection mechanisms could be proposed. For example, a restricted stable beta IBP \citep{WilMacXin2013,UtkPraStoPerKoc2018} could give more explicit control over the number of cliques. We leave exploration of alternative mechanisms as future work.

\bibliographystyle{plainnat}
\bibliography{sparsedense}

\begin{appendices}
\appendix

\section{RANDOM MULTIGRAPHS}\label{app:multigraph}
Our random clique construction can also be used to define a multigraph, where the number of edges between vertices $i$ and $j$ is given by the number of generating cliques containing both vertices. The adjacency matrix of this multigraph is given by

$$Z^TZ - \mbox{diag}(X^TZ)$$

where $Z$ is the binary matrix where $Z_{ni}=1$ iff clique $n$ contains vertex $i$.

This multigraph exhibits similar properties to the graph. We can define multigraph sparsity analogously to graph sparsity, saying a random multigraph is sparse if the number of edges increases sub-quadratically with the number of vertices.  Proposition~\ref{prop:vertices} holds in the multigraph setting, indicating that the number of vertices  $|V|$ grows with the number of cliques $\nc$ as $\nc^\sigma$. Since the number of vertices in each clique is marginally distributed as $\mbox{Poisson}(\alpha)$, the expected number of edges in $\nc$ cliques is $\nc \alpha^2/2$ -- i.e.\ $|E|$ grows linearly in expectation with $\nc$. This implies that the number of edges grows with the number of vertices as  $O(|V|^{1/\sigma})$, indicating the multigraph will be sparse when $\sigma>0.5$. We demonstrate this empirically in Appendix~\ref{app:sparsity_sims}.

\section{INFERENCE}

\paragraph{Initialization}
In the fully observed graph setting, the likelihood $P(G|Z)$ is a step function that is one iff $G = \min(Z^TZ - \mbox{diag}(Z^TZ),1)$, so a sampler can only make moves that do not change the sparsity pattern of $Z^TZ$. Because of this, the clique must be initialized to values compatible with $G$. One way of doing this is to initialize to the set of 2-cliques corresponding to the edges of the graph. An alternative approach is to use an edge clique cover algorithm; however this may be infeasible for large graphs since finding a minimal cover is NP hard. For the partially observed graph, it suffices to choose an initialization so that $Z^TZ$ covers all edges in the graph.

\paragraph{Split/merge sampling for $Z$} To explore the space of clique covers, we propose split/merge moves that can either split a large clique into two potentially overlapping cliques, or merge two cliques into a single clique. We select an edge $(i, j)$ uniformly at random from $E$, and then for each vertex $i,j$ associated with that edge, we select a clique index $\cc_i, \cc_j$ uniformly from the set of cliques it belongs to. 
If $\cc_i = \cc_j$, we propose a new matrix $Z^*$, where the row $Z_{\cc_i}$ has been replaced with two rows $Z^*_{\cc_i}$ and $Z^*_{\cc_j}$. We set $Z^*_{\cc_{i}i}=1$ and $Z^*_{\cc_{j}j}=1$. Then, for each non-zero entry $k$ in $Z_{\cc_i}$, we consider possible settings of $(Z^*_{\cc_{i}k}, Z^*_{\cc_{j}k}) \in \{(0,1),(1,0),(1,1)\}$. We exclude any settings that are incompatible with the network structure, and select uniformly between the other options.

If $\cc_i\neq \cc_j$, we propose replacing the cliques $Z_{\cc_i}$ and $Z_{\cc_j}$ with their union $Z^*_{\cc_i} = Z_{\cc_i}\cup Z_{\cc_j}$. We note that the resulting proposal could have a likelihood of zero, due to introducing edges that do not appear in the network.

We accept or reject the split or merge by calculating the appropriate reversible jump MCMC acceptance probability.

\paragraph{Gibbs sampling step for $Z_{ij}$} 
For $Z_{ij}$ such that $m_{k}^{\neg i} = \sum_{i^\prime \neq i} Z_{i^\prime j} <0$, we can augment the split/merge proposals with Gibbs steps for that sample from the conditional distribution
$$\begin{aligned}
p(Z_{ij} = 1|Z_{\neg ij}, G ) \propto& \frac{m_j^{\neg i} - \sigma}{\nc +c -1} \mathcal{L}(Z_{ij} = 1)\\
p(Z_{ij} = 0|Z_{\neg ij}, G ) \propto& \frac{\nc+c-1-m_j^{\neg i} + \sigma}{\nc+c -1}\mathcal{L}(Z_{ij} = 1) ,\end{aligned}$$
where $\mathcal{L}(Z_{ij} = z) = P(G|Z_{ij} = z, Z_{\neg ij}, \{\pi_cc\})$ is the likelihood of the graph given the proposed clique matrix.

We can then sample the number of all-zero cliques using a Metropolis-Hastings proposal, proposing a new number of all-zero cliques from an appropriate distribution and calculating the corresponding acceptance probability. We found adding Gibbs sampling steps did not improve mixing in the fully observed setting (where the set of compatible clique matrices is more constrained), but was important in the partially observed setting.

\section{Simulation results}\label{app:sparsity_sims}
In this section, we provide simulations that support the theoretical properties explored in the main paper.

In Section~\ref{sec:sparsity} we showed that the number of edges grows sub-quadratically with the number of vertices, as $0\left(\min(|V|^{1+\sigma}{2\sigma}, |V|^\frac{3}{2})\right)$. In Appendix~\ref{app:multigraph}, we showed that in the related multigraph, the number of edges grows as $O(|V|^{1/\sigma})$, which is sub-quadratic when $\sigma>0.5$.

We can validate this limiting behavior using simulations from the graph and multigraph. Figures~\ref{fig:g_sparsity} and \ref{fig:mg_sparsity}  show how $|E|$ and $|V|$ co-vary in random graphs and random multigraphs respectively, for different values of $c$ and $\sigma$.  The scatter plots show $(|V|,|E|)$ pairs for ten simulations, evaluated after each of 100 cliques are added. In the random graphs (Figure~\ref{fig:g_sparsity}), for high values of $\sigma$ we have a near-linear relationship between $|V|$ and $|E|$, indicating an extremely sparse graph. As $\sigma$ decreases the exponent increases and the level of sparsity decreases (although the graph never becomes dense). For the multigraphs (Figure~\ref{fig:mg_sparsity}) we see a sub-quadratic relationship between $|V|$ and $|E|$ for $\sigma>0.5$, but not for $\sigma \leq 0.5$. In each case, the concentration parameter $c$ controls the variance.

\begin{figure*}
  \centering
  \begin{subfigure}{0.3\textwidth}
    \includegraphics[width=\textwidth]{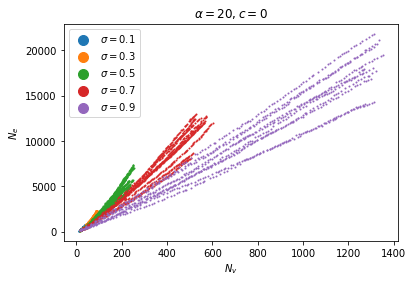}
  \end{subfigure}
  \begin{subfigure}{0.3\textwidth}
    \includegraphics[width=\textwidth]{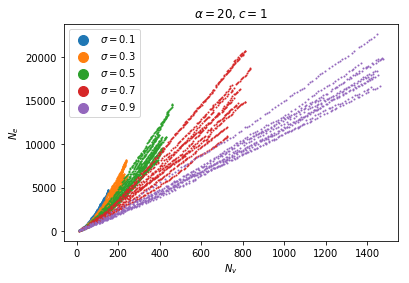}
  \end{subfigure}
  \begin{subfigure}{0.3\textwidth}
    \includegraphics[width=\textwidth]{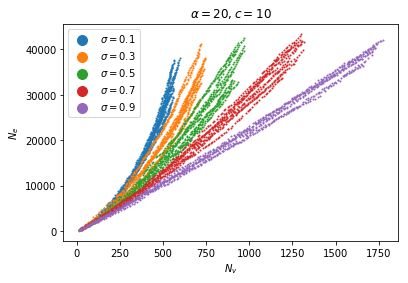}
  \end{subfigure}
  \\
  \begin{subfigure}{0.3\textwidth}
    \includegraphics[width=\textwidth]{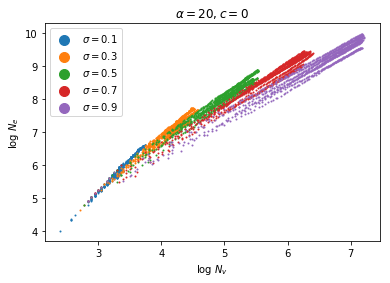}
  \end{subfigure}
  \begin{subfigure}{0.3\textwidth}
    \includegraphics[width=\textwidth]{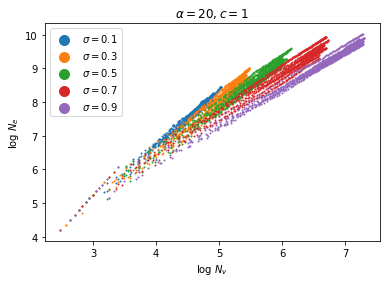}
  \end{subfigure}
  \begin{subfigure}{0.3\textwidth}
    \includegraphics[width=\textwidth]{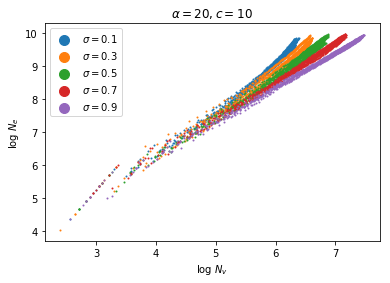}
  \end{subfigure}
  \caption{Number of edges vs number of vertices for a  graph simulated with $\alpha=20$ and varying values for $c$ and $\sigma$. Top: linear scale; bottom: log scale.}
  \label{fig:g_sparsity}
\end{figure*}

\begin{figure*}
  \centering
  \begin{subfigure}{0.3\textwidth}
    \includegraphics[width=\textwidth]{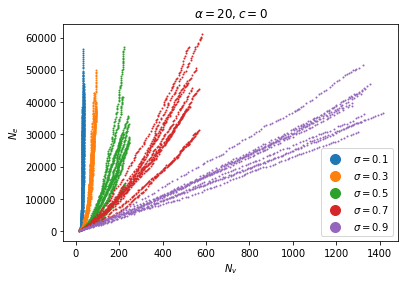}
  \end{subfigure}
  \begin{subfigure}{0.3\textwidth}
    \includegraphics[width=\textwidth]{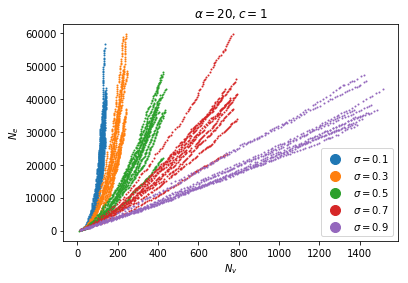}
  \end{subfigure}
  \begin{subfigure}{0.3\textwidth}
    \includegraphics[width=\textwidth]{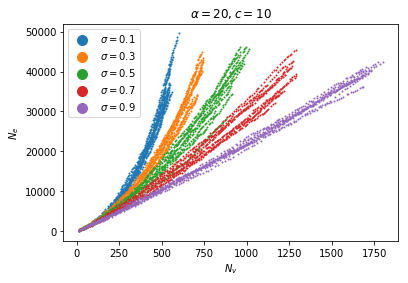}
  \end{subfigure}
  \\
  \begin{subfigure}{0.3\textwidth}
    \includegraphics[width=\textwidth]{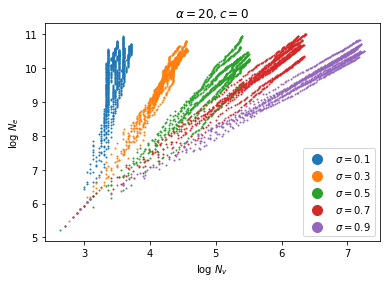}
  \end{subfigure}
  \begin{subfigure}{0.3\textwidth}
    \includegraphics[width=\textwidth]{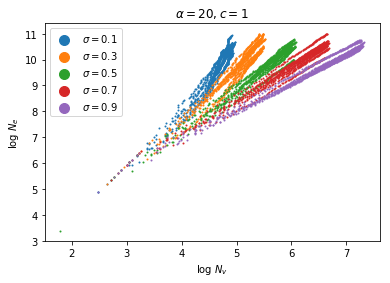}
  \end{subfigure}
  \begin{subfigure}{0.3\textwidth}
    \includegraphics[width=\textwidth]{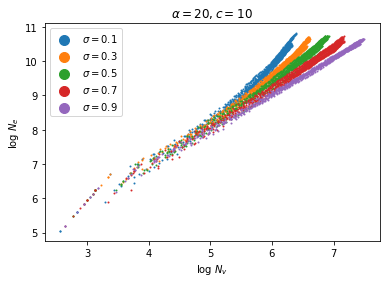}
  \end{subfigure}
  \caption{Number of edges vs number of vertices for a  multigraph simulated with $\alpha=20$ and varying values for $c$ and $\sigma$. Top: linear scale; bottom: log scale.}
  \label{fig:mg_sparsity}
\end{figure*}

We also see this behavior if we look at how the average density $D=2|E|/|V|(|V|-1)$ varies with the number of cliques and the average clique size. Figure~\ref{fig:g_density} shows that in the graph setting, the density decreases as we add more cliques, but converges to a level that depends on the parameter $\sigma$. Figure~\ref{fig:dens_vs_sigma} shows that this value is largely invariant to the clique size. In the multigraph setting, Figure~\ref{fig:mg_density} shows that the density decreases as we add more cliques if $\sigma>0.5$, but increases if $\sigma<0.5$.

\begin{figure*}
  \centering
  \begin{subfigure}{0.46\textwidth}
    \includegraphics[width=\textwidth]{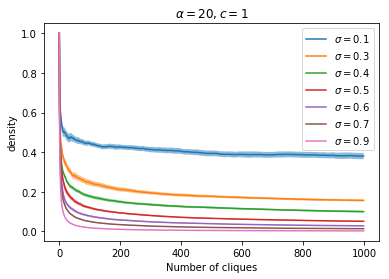}
  \end{subfigure}
  \begin{subfigure}{0.46\textwidth}
    \includegraphics[width=\textwidth]{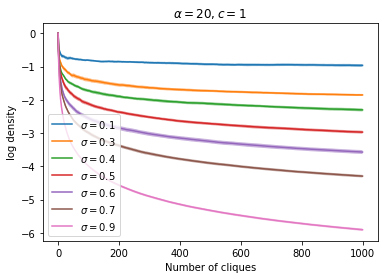}
  \end{subfigure}
  \caption{Network densities for graphs with varying values of $\sigma$, as the number of cliques increases.}
  \label{fig:g_density}
\end{figure*}

\begin{figure*}[t]
    \centering
    \includegraphics[width=.46\textwidth]{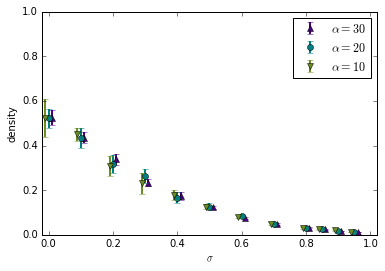}
    \caption{Graph density for increasing values of $\sigma$, for $c=1$, $\nc=100$, and varying values of $\alpha$}
    \label{fig:dens_vs_sigma}
\end{figure*}

\begin{figure*}
  \centering
  \begin{subfigure}{0.46\textwidth}
    \includegraphics[width=\textwidth]{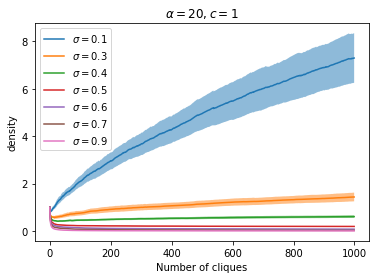}
  \end{subfigure}
  \begin{subfigure}{0.46\textwidth}
    \includegraphics[width=\textwidth]{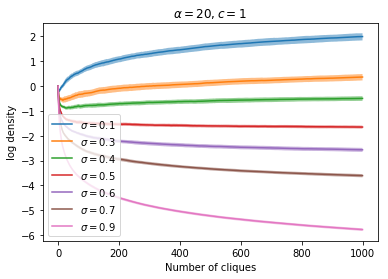}
  \end{subfigure}
  \caption{Network densities for multigraphs with varying values of $\sigma$, as the number of cliques increases.}
  \label{fig:mg_density}
\end{figure*}

In Section~\ref{sec:density}, we argued that the expected average maximal clique is lower-bounded by $\alpha$, and that it will increase as we see greater overlap between cliques. We see this empirically in Figure~\ref{fig:max_clique_vs_sigma}, where we simulated graphs with varying values of $\alpha$ and $\sigma$ and calculated the average maximal clique per vertex using the \texttt{NetworkX} Python package \citep{HagSwaChu2008}.

\begin{figure*}[t]
    \centering
    \includegraphics[width=.5\textwidth]{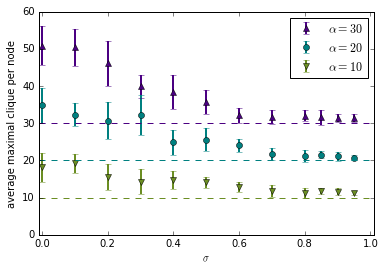}
    \caption{Average largest maximal clique per vertex, for a binary graph with $c=1$, $\nc=100$, and varying values of $\alpha$.}
    \label{fig:max_clique_vs_sigma}
\end{figure*}

\section{Comparison of random adjacency matrices}

In Section~\ref{sec:exp_full}, we showed part of the adjacency matrix for the NeurIPS interaction dataset, and the corresponding adjacency matrices sampled using our model (RCC), the Caron \& Fox sparse exchangeable model (BNPgraph), and a Kronecker graph with 2$\times$2 initiator matrix (Kron). The full adjacency matrices are shown in Figures~\ref{fig:adjmatfull:originalnips}, \ref{fig:adjmatfull:rcc}, \ref{fig:adjmatfull:bnp} and \ref{fig:adjmatfull:kron}. To make the differences clearer, we used the Girvan-Newman algorithm \citep{Girvan:Newman:2002}, as implemented in \texttt{NetworkX} \citep{HagSwaChu2008}, to find communities in the generated graphs. We have added blue boxes to the figures to represent the communities found.

\begin{figure*}
    \centering
    \includegraphics[width=\textwidth]{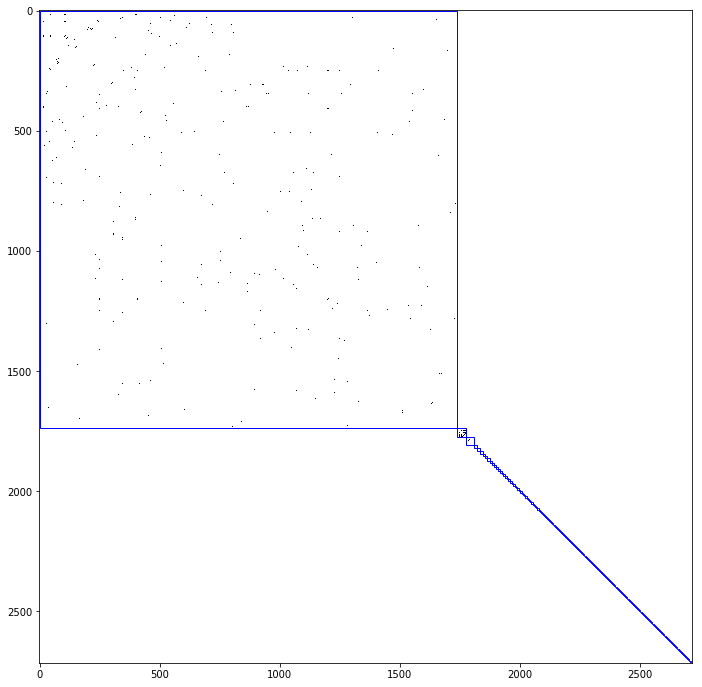}
    \caption{Original co-authorship  adjacency matrix  for  NeurIPS  publications (1987-2003).}
    \label{fig:adjmatfull:originalnips}
\end{figure*}

\begin{figure*}
    \centering
    \includegraphics[width=\textwidth]{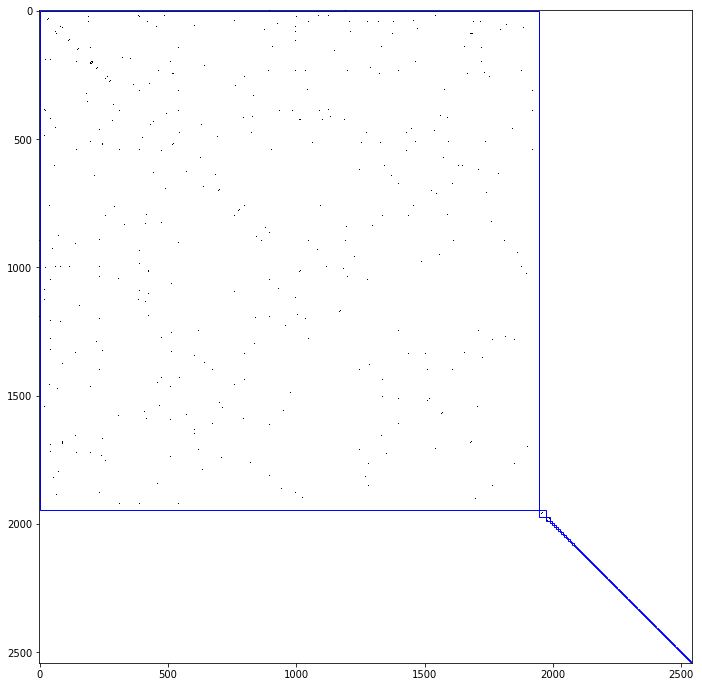}
    \caption{Adjacency matrix of a sample from a random clique cover (RCC) graph, with parameters learned on the NeurIPS dataset.}
    \label{fig:adjmatfull:rcc}
\end{figure*}

\begin{figure*}
    \centering
    \includegraphics[width=\textwidth]{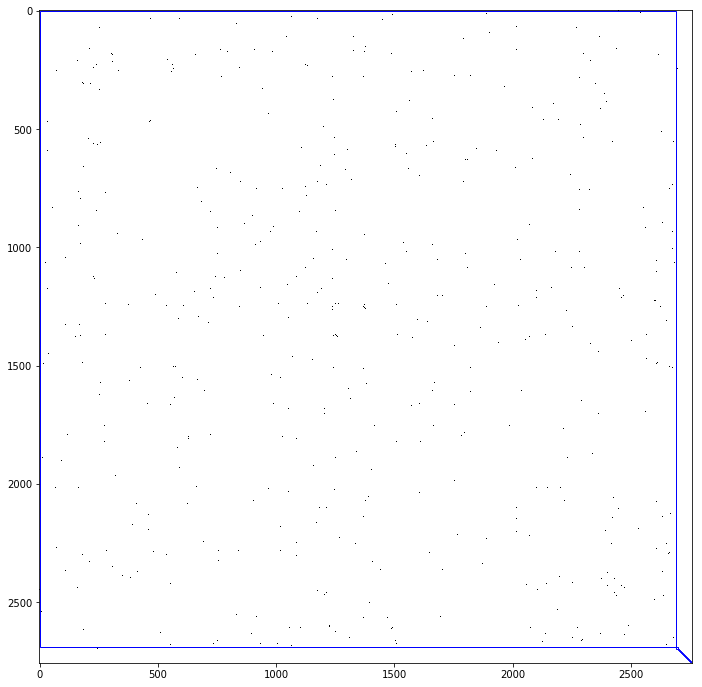}
    \caption{Adjacency matrix of a sample from a sparse exchangeable graph (BNPgraph), with parameters learned on the NeurIPS dataset.}
    \label{fig:adjmatfull:bnp}
\end{figure*}

\begin{figure*}
    \centering
    \includegraphics[width=\textwidth]{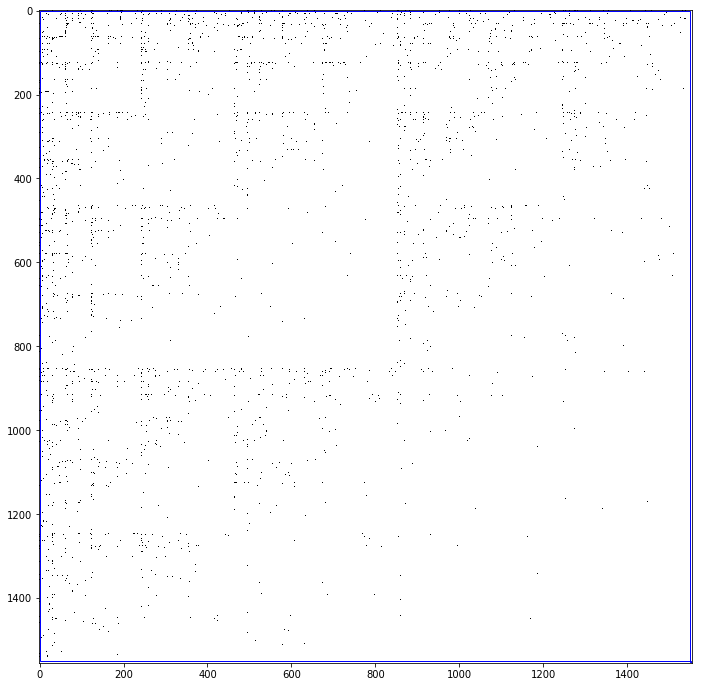}
    \caption{Adjacency matrix of a sample from a Kronecker graph with 2$\times$2 initiator matrix (Kron), with parameters learned on the NeurIPS dataset.}
    \label{fig:adjmatfull:kron}
\end{figure*}

\end{appendices}

\end{document}